\crefname{appendix}{}{}
\newcommand{\thickhline}{%
    \noalign {\ifnum 0=`}\fi \hrule height 1pt
    \futurelet \reserved@a \@xhline
}
\newcolumntype{"}{@{\hskip\tabcolsep\vrule width 1pt\hskip\tabcolsep}}
\pgfplotsset{compat=1.13} %
            \def\pgfplotstable@loc@TMPd{\pgfplotstablegetelem{##1}{#1}\of}
            \edef\tempa{\pgfplotsretval}
            \edef\tempb{#2}
\pgfplotsset{
  discard if/.style 2 args={
    x filter/.code={
      \edef\tempa{\thisrow{#1}}
      \edef\tempb{#2}
      \ifx\tempa\tempb
      \def\pgfmathresult{inf}
      \fi
    }
  },
  discard if not/.style 2 args={
    x filter/.code={
      \edef\tempa{\thisrow{#1}}
      \edef\tempb{#2}
      \ifx\tempa\tempb
      \else
      \def\pgfmathresult{inf}
      \fi
    }
  }
}
\newcommand{\tHerTwoClub}{\textsc{$t$-Hereditary $2$-Club}\xspace}
\newcommand{\tConTwoClub}{\textsc{$t$-Connected $2$-Club}\xspace}
\newcommand{\tRobTwoClub}{\textsc{$t$-Robust $2$-Club}\xspace}
\newtheorem{lemma}{Lemma}
\newtheorem{theorem}{Theorem}
\newtheorem{corollary}{Corollary}
\newtheorem{observation}{Observation} 
\newtheorem{definition}{Definition} 
\newtheorem{rrule}{Reduction Rule}
\newcommand{\runtimeDiagram}[1]{
	\begin{tikzpicture}[scale=1]
		\begin{axis}[
					width=0.48\textwidth,
					height=0.4\textwidth,
					xlabel={$t$},
					ylabel={Running time [sec]},
					xmode=log,
					ymode=log,
					legend style = {
									at={(0.5, 1.07)},
									anchor={south},
									font = \small
					},
					legend cell align = left,
					legend columns = 2,
					]
			\addplot[red,mark=triangle*,discard if not={file}{#1}] table [y=timeHereditary, x=t,col sep = semicolon]{compare-models.csv};
			\addplot[blue,mark=star,discard if not={file}{#1}] table [y=timeRobust, x=t,col sep = semicolon]{compare-models.csv};
			\addplot[black,mark=o,discard if not={file}{#1}] table [y=timeConnected, x=t,col sep = semicolon]{compare-models.csv};
		\end{axis}
	\end{tikzpicture}
}
\newcommand{\sizeDiagram}[1]{
	\begin{tikzpicture}[scale=1]
		\begin{axis}[
					width=0.47\textwidth,
					height=0.4\textwidth,
					xlabel={$t$},
					ylabel={2-club size},
					xmode=log,
					legend style = {
									at={(0.5, 1.07)},
									anchor={south},
									font = \small
					},
					legend cell align = left,
					legend columns = 2,
					]
			\addplot[red,mark=triangle*,discard if not={file}{#1}] table [y=sizeHereditary, x=t,col sep = semicolon]{compare-models.csv};
			\addplot[blue,mark=star,discard if not={file}{#1}] table [y=sizeRobust, x=t,col sep = semicolon]{compare-models.csv};
			\addplot[black,mark=o,discard if not={file}{#1}] table [y=sizeConnected, x=t,col sep = semicolon]{compare-models.csv};
		\end{axis}
	\end{tikzpicture}
}
\begin{document}

\title{Exact Algorithms for Finding Well-Connected 2-Clubs in Real-World Graphs: Theory and Experiments\footnote{Parts of this work are based on the last author's master thesis~\cite{Pic15}. Work started when all authors were with TU~Berlin.}}

\author{Christian Komusiewicz$^{1,}$\footnote{CK was partially supported by the DFG, project MAGZ (KO 3669/4-1).} \and André~Nichterlein$^2$ \and Rolf~Niedermeier$^2$ \and Marten~Picker~$^2$}
\date{$^1$ Fachbereich Mathematik und Informatik, Philipps-Universität Marburg, Germany, \\
$^2$ Algorithmics and Computational Complexity, Faculty IV, TU Berlin, Germany,\\
\texttt{\small komusiewicz@informatik.uni-marburg.de \{andre.nichterlein,rolf.niedermeier\}@tu-berlin.de}}

\maketitle

\begin{abstract}
	Finding large ``cliquish'' subgraphs is a central topic in graph mining and community detection. 
	A popular clique relaxation are 2-clubs: instead of asking for subgraphs of diameter one (these are cliques), one asks for subgraphs of diameter at most two (these are 2-clubs). 
	A drawback of the 2-club model is that it produces star-like hub-and-spoke structures as maximum-cardinality solutions. 
	Hence, we study 2-clubs with the additional constraint to be well-connected. 
    More specifically, we investigate the algorithmic complexity for three variants of well-connected 2-clubs, all established in the literature: robust, hereditary, and ``connected'' 2-clubs. 
	Finding these more cohesive 2-clubs is NP-hard; nevertheless, we develop an exact combinatorial algorithm, extensively using efficient data reduction rules. 
	Besides several theoretical insights we provide a number of empirical results based on an  engineered implementation of our exact algorithm. 
	In particular, the algorithm significantly outperforms existing algorithms on almost all (sparse) real-world graphs we considered.
\end{abstract}

\section{Introduction}
\label{sec:intro}
The $s$-club model, introduced by~\citet{mokken}, is a well-es\-tab\-lished mathematical  model for community mining in graphs~\cite{For10}. 
An important special case is the 2-club model. 
In the corresponding algorithmic problem, the task is to compute a largest 2-club, that is, a maximum-cardinality vertex subset inducing a subgraph of diameter  at most two. 
As a community model, 2-clubs have the drawback that they often form hub-and-spoke structures, that is, they often consist of one vertex that is adjacent to all other vertices of the community plus only few additional edges. 
Indeed, previous experimental work revealed that most maximum-cardinality 2-clubs in real-world graphs have exactly this hub-and-spoke structure and are thus fairly sparse~\cite{hartung_experiments}.
This is essentially the opposite of cliques which provide a subgraph model with maximal (edge) density.  
In this sense, standard 2-clubs are far away from being cliquish. 
The drawback of cliques, however, is that they are too demanding with respect to density in order to be meaningful in many applications based on mining cohesive subgraphs~\cite{PYB13,Kom16}.
Hence, we study three 2-club models that exclude the low degree of connectivity that 2-clubs may have while still providing meaningful community abstractions.

\paragraph*{Three 2-Club Models}
In the classic 2-club model, one relaxes the clique demand by allowing that not all vertices of the community are adjacent but that they may also have distance two or, equivalently, that every vertex pair is either adjacent or has a common neighbor. 
The problem with this simple relaxation is that there may be only one vertex which is the common neighbor of all nonadjacent vertices. 
Hence, removing this vertex may destroy connectivity of the 2-club. 
Next, we describe three ``well-connected'' variants of 2-clubs that have been proposed in the literature~\cite{PYB13,VB12}; these three variants will be the central subjects of our studies.

In the first variant of well-connected 2-clubs, one demands that every pair of adjacent vertices has~$t-1$ common neighbors and every pair of nonadjacent vertices has~$t$ common neighbors.

\begin{definition}[\cite{VB12}]\label{def:robust}
  A vertex set~$S \subseteq V$ is a \emph{$t$-robust 2-club} in a graph~$G=(V,E)$ if any pair of vertices in~$S$ is connected in~$G[S]$ by~$t$ internally vertex-disjoint paths of length at most two.
\end{definition} 
In the second variant, one demands that the deletion of few vertices does not destroy the 2-club property.

\begin{definition}[\cite{PYB13}]
	\label{def:t-her} A vertex set~$S \subseteq V$ is a \emph{$t$-hereditary 2-club} in a graph~$G=(V,E)$ if~$G[S\setminus U]$ is a 2-club for all~$U\subset S$ where~$|U|\le t$.
\end{definition}

In the third variant, one only demands that the deletion of few vertices does not destroy connectivity.
\begin{definition}[\cite{PYB13}]\label{def:t-conn} 
	A vertex set~$S \subseteq V$ is a \emph{$t$-connected 2-club} in a graph~$G=(V,E)$ if it is a 2-club and~$G[S]$ is $t$-connected, that is,~$|S|>t$ and~$G[S\setminus U]$ is connected for all~$U\subseteq S$ where~$|U|< t$.
\end{definition}

Observe the difference of one in the sizes of the sets~$U$ in \cref{def:t-her,def:t-conn}. 
Refer to \Cref{fig:model-comparison} for a comparison of the three variants of well-connected 2-clubs.
\begin{figure}
	\centering
	\begin{tikzpicture}[>=stealth,rounded corners=2pt,vert/.style={circle,thick,draw,inner sep=2pt,minimum size=2.5mm}]
		\coordinate (a) at (0,0);
		\coordinate (b) at (0,1);
		\foreach[count=\i] \x in {1,1,2,2,3,3}{
			\node(a\i) at ($ (b)!2cm!330-\i*60:(a) $) [vert] {\ifodd\i{$u_\x$}\else{$v_\x$}\fi};
		}
		\foreach \i in {1,3,5}{
			\foreach \j in {2,4,6}{
				\draw [-,thick] (a\i) -- (a\j);
			}
		}
		\tikzstyle{edge} = [color=black,opacity=.15,line cap=round, line join=round, line width=20pt]
		\draw [edge] (a1.center) -- (a6.center) -- (a5.center) -- (a4.center);
		\draw [edge] (a1.center) -- (a2.center) -- (a3.center) -- (a4.center);
		\draw [edge] (a1.center) -- (a4.center);
		
		\begin{scope}[xshift=6cm]
			\coordinate (a) at (0,0);
			\coordinate (b) at (0,1);
			\foreach[count=\i] \x in {1,1,2,2,3,3}{
				\node(a\i) at ($ (b)!2cm!330-\i*60:(a) $) [vert] {\ifodd\i{$u_\x$}\else{$v_\x$}\fi};
			}
			\foreach \i in {1,3,5}{
				\foreach \j in {2,4,6}{
					\draw [-,thick] (a\i) -- (a\j);
				}
			}
			\tikzstyle{edge} = [color=black,opacity=.15,line cap=round, line join=round, line width=20pt]
			\draw [edge] (a1.center) -- (a2.center) -- (a3.center);
			\draw [edge] (a1.center) -- (a4.center) -- (a3.center);
			\draw [edge] (a1.center) -- (a6.center) -- (a3.center);
		\end{scope}
	\end{tikzpicture}
	\caption{
		A complete bipartite graph~$K_{3,3}$ where the three internally vertex-disjoint paths from~$u_1$ to~$v_2$ (left side) and~$u_1$ to~$u_2$ (right side) are highlighted.
		Note that two of the paths on the left side are of length three. 
		Thus, a $K_{3,3}$ is a $1$-robust 2-club but not a $2$-robust 2-club.
		It is also a $2$-hereditary 2-club as deleting two vertices leaves either a~$K_{1,3}$ or a~$K_{2,2}$, both of which are 2-clubs.
		Finally, it is a $3$-connected 2-club (the pictures above cover all cases up to symmetry).
	}
	\label{fig:model-comparison}
\end{figure}
Assuming 2-club sizes~$|S|\geq t+2$, it is easy to see that a $t$-robust 2-club is a $(t-1)$-hereditary 2-club which again is a $t$-connected 2-club.
Clearly, for all three models, fulfilling the connectivity demands for some~$t$ implies that they are also fulfilled for all~$t'\le t$. 
Hence, the size of maximum-cardinality 2-clubs is nonincreasing with increasing~$t$. 

Our three central computational problems, formulated as decision problems (which is more suitable for statements on computational complexity), are defined as follows.

\begin{quote}
  \textsc{$t$-Robust / $t$-Hereditary / $t$-Connected 2-Club}\\
  \textbf{Input:} An undirected graph $G = (V,E)$
  and nonnegative integers $t$ and~$k$. \\
  \textbf{Question:} Does~$G$ contain a $t$-robust / $t$-hereditary / $t$-connected 2-club of size at least~$k$?
\end{quote}

For brevity, when making general statements holding for all three variants, then we refer to them as~$t$-well-connected 2-clubs. 

\paragraph*{Related Work}
\label{sec:related}
Dense subgraph discovery or mining cohesive subgraphs is 
an active field in graph mining
research~\cite{LRJA10,PYB13,Kom16}.  Algorithms, complexity studies, and experiments for
finding $s$-clubs and in particular 2-clubs play a significant role in this
context~\cite{BB17,bourjolly_exact,BS17,carvalho,chang,GHKR14,hartung_experiments,hartung_structural,MB12,schaefer2012,VB12,YPB17}.
\citet{VB12} introduced and motivated the concept of $t$-robustness and~performed experiments
for~$t=2$ and $t=3$, showing that the predicted communities 
for these settings are much larger than
maximum cliques.  
\citet{PYB13} discussed several further variants of $s$-clubs. 

For~$t$-connected 2-clubs, the special case~$t=2$ demands that the $2$-club is biconnected. 
This case was studied by~\citet{YPB17} who presented a branch-and-bound and a branch-and-cut algorithm for the problem of finding biconnected 2-clubs. 
Moreover, \citet{YPB17} showed that the problem of finding a largest biconnected $2$-club is NP-hard.

Finally, \citet{CA17} studied a variant of $s$-clubs where each vertex is demanded to be part of at least one triangle.

There is a large body of previous work that considers the standard 2-club model. 
In particular, several algorithmic approaches (heuristics, exact algorithms) have been proposed and examined for the NP-hard computational task~\cite{bourjolly_exact} to find maximum-cardinality 2-clubs~\cite{bourjolly_exact,bourjolly_heuristics,BS17,chang,hartung_experiments,MB12,schaefer2012}.  

Of particular importance to our work is an exact algorithm  for \textsc{2-Club} that uses the following branching: If the graph contains two vertices~$u$ and~$v$ that have distance at least three, then branch into the two cases to remove~$u$ or~$v$ from~$G$. This algorithm was first proposed by~\citet{bourjolly_exact}. Later, it was shown that this branching gives a fixed-parameter algorithm for the parameter~$|V|-k$ with running time~$O(2^{|V|-k}nm)$~\cite{schaefer2012} and an algorithm with running time~$O(\alpha^n)$ where~$\alpha<1.62$ is the golden ratio~\cite{chang}. A combination of this branching algorithm with data reduction and pruning rules achieves the so far best performance on random networks and on sparse real-world networks~\cite{hartung_experiments}. %

\paragraph*{Our Contributions}
\label{sec:results}
On a conceptual level, we provide an alternative characterization of
$t$-hereditary 2-clubs (see \cref{lem:hereditary-characterization}).
Moreover, we present a ``unifying view'' on all three considered models 
based on ``compatible vertices'' (\cref{def:compatible}).

On the level of algorithm theory, to the best of our knowledge we provide 
the first (formal) NP-hardness proofs for \tRobTwoClub and \tHerTwoClub 
for all~$t\geq 1$. 
The corresponding reduction also yields an exponential 
running time lower bound based 
on the Strong Exponential Time Hypothesis (SETH). 
Moreover, we generalize the NP-hardness result 
for \tConTwoClub due to \citet{YPB17} by showing NP-hardness for 
all $t\geq 1$ even when restricted to split graphs.
On the positive side, we generalize the above-mentioned fixed-parameter algorithm for \textsc{2-Club} parameterized by~$|V|-k$~\cite{bourjolly_exact,chang,hartung_experiments,schaefer2012}, to obtain a fixed-parameter 
algorithm for all three problem variants.

On an algorithm engineering and empirical level, under heavy use of efficient and effective data reduction rules and using the above-indicated ``unified view'' on the three problem variants, we develop an implementation of our fixed-parameter algorithm that outperforms existing implementations on almost all (large and sparse) real-world graphs we experimented with.
Only for random graphs we are clearly beaten by previous implementations.

\section{Preliminaries}
\label{sec:prelim}
We only consider simple undirected graphs~$G=(V,E)$ where $V$ is the vertex set and $E\subseteq \{\{u,v\}\mid u,v\in V\}$ is the edge set. 
Throughout this work, we use $n:= |V|$ and $m:= |E|$ to denote the number of vertices and the number of edges in the input graph~$G$. A \textit{subgraph} $(V', E')$ of a graph $G = (V, E)$ is a graph with $V' \subseteq V$ and $E' \subseteq E$ such that all edges in $E'$ are between vertices in $V'$, i.\,e.\ a graph derived from $G$ by deleting vertices and edges. 
For a graph $G = (V, E)$ and a subset $S \subseteq V$ of vertices, $G[S]:=(S,\{\{u,v\}\in E\mid u,v\in S\})$ denotes the \emph{subgraph induced by $S$}.  
For~$v \in V$ we set~$G-v := G[V\setminus \{v\}]$.
For~$F \subseteq E$ we set $G-F := (V, E \setminus F)$.

A \textit{path} is a sequence of vertices such that no vertex occurs twice and any two successive vertices in the path are adjacent.  
The \textit{length of a path} is the number of edges along it.  
The \textit{distance} $d(u, v)$ between two vertices $u$ and $v$ is the length of a shortest path between these two vertices.  
If there is a path between two vertices, then these vertices are said to be \textit{connected}, and \textit{disconnected} otherwise.  
A graph is \textit{connected} if every pair of its vertices is connected. 
A \textit{connected component} of a graph is a maximal set of vertices which are pairwise connected. 
Unless stated otherwise, we assume that the input graphs are connected, as otherwise we can process each connected component separately. 
The \textit{diameter} of a graph is the length of the longest shortest path in the graph, i.\,e.\ the maximum distance between any two vertices in the graph.  
The \textit{open d-neighborhood} $N_{d}(v)$ of a vertex $v$ is the set of all vertices within distance $d$ of $v$ except $v$ itself.  
The \textit{closed d-neighborhood} $N_{d}[v]$ is defined as $N_{d}[v] := N_{d}(v) \cup \{v\}$. 
We set~$N(v) := N_1(v).$

The \emph{vertex connectivity} of a graph is the minimum number of vertices that has to be removed to disconnect the graph or make it trivial; \emph{edge connectivity} is defined the same way over edges. 
The \emph{edge density} of a graph is $m / {{n}\choose{2}}$, that is, the fraction of present edges compared to the maximum possible number of edges in a graph of $n$ vertices. 
In the \textsc{Clique} problem we are given an undirected graph~$G = (V,E)$ and an integer~$k$, and we ask whether~$G$ contains a clique of size~$k$, that is, a set~$S \subseteq V$ of size~$k$ such that every pair of vertices in~$S$ is adjacent.
Somewhat abusing notation we will use the terms $t$-robust, $t$-hereditary, and $t$-connected 2-club to refer to both the vertex sets and the subgraphs induced by them. 

\section{Structural Properties of Well-Connected 2-Clubs}
\label{sec:properties}

Before studying the computational complexity of finding $t$-well-connected 2-clubs, we first derive some structural properties of the respective $t$-well-connected 2-club models. 
We begin with a simple characterization of $t$-hereditary 2-clubs.
\begin{lemma}\label{lem:hereditary-characterization}
  A vertex set~$S \subseteq V$ is a $t$-hereditary 2-club in a graph~$G=(V,E)$ if and only if each nonadjacent pair of vertices of~$S$ has at least~$t+1$ common neighbors in~$G[S]$.
\end{lemma}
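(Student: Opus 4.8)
The plan is to prove both directions of the equivalence by directly unpacking the definition of a $t$-hereditary 2-club, namely that $G[S\setminus U]$ is a 2-club for every $U\subset S$ with $|U|\le t$. The key observation is that the 2-club property is about pairs of vertices having a common neighbor, and the hereditary requirement forces this to survive the deletion of up to $t$ vertices. So the condition on common neighbors should translate into "enough" common neighbors to tolerate $t$ deletions.

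\textbf{Forward direction.} Assume $S$ is a $t$-hereditary 2-club, and let $\{u,v\}$ be a nonadjacent pair in $S$. Suppose for contradiction that $u$ and $v$ have at most $t$ common neighbors in $G[S]$; call this set of common neighbors $W$, so $|W|\le t$. I would take $U:=W$, a subset of $S$ of size at most $t$. Since $u,v\notin W$ (they are each other's nonneighbors, hence not common neighbors of themselves), both $u$ and $v$ survive in $G[S\setminus U]$. But in $G[S\setminus U]$ the pair $u,v$ is still nonadjacent and now has no common neighbor at all, so their distance is at least three — contradicting that $G[S\setminus U]$ is a 2-club. One subtlety to handle: the definition uses $U\subset S$ (proper subset), so I should check that $W$ being a proper subset is fine, which holds because $u,v\in S\setminus W$, so $W\neq S$. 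Hence every nonadjacent pair has at least $t+1$ common neighbors.

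\textbf{Backward direction.} Assume every nonadjacent pair in $S$ has at least $t+1$ common neighbors in $G[S]$. Let $U\subset S$ with $|U|\le t$; I must show $G[S\setminus U]$ is a 2-club, i.e.\ every pair of vertices in $S\setminus U$ is at distance at most two. Take any two vertices $x,y\in S\setminus U$. If they are adjacent, we are done. If not, then as a nonadjacent pair in $S$ they have at least $t+1$ common neighbors in $G[S]$; since $|U|\le t$, at least one such common neighbor, say $w$, lies outside $U$, so $w\in S\setminus U$ and $x$--$w$--$y$ is a path of length two in $G[S\setminus U]$. Thus $x$ and $y$ are at distance at most two. (A pair that is nonadjacent in $G[S\setminus U]$ is also nonadjacent in $G[S]$, so the hypothesis applies.)

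\textbf{The main obstacle} is not conceptual but bookkeeping: I must be careful that the common neighbors counted in $G[S]$ are vertices of $S$, so that after removing $U$ the surviving common neighbor still lies in $S\setminus U$ and the witnessing path stays inside the induced subgraph. The pigeonhole step — $t+1$ common neighbors cannot all be destroyed by deleting at most $t$ vertices — is the heart of the argument, and aligning the off-by-one between "$t+1$ common neighbors" and "$|U|\le t$ deletions" is exactly what makes the two sides of the equivalence match.
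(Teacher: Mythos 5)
Your proof is correct and follows essentially the same route as the paper: the forward direction deletes the (at most $t$) common neighbors of a nonadjacent pair to violate the 2-club property, and the backward direction uses the pigeonhole fact that $t+1$ common neighbors cannot all lie in a deletion set of size at most $t$. Your extra care that the deleted set lies inside $S$ (so that it is a legitimate $U\subset S$ in \cref{def:t-her}) is a welcome tightening of a point the paper glosses over, but the argument is the same.
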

\begin{proof}
  $(\Rightarrow)$ We show the contraposition of this direction. Thus, assume that there
  are two nonadjacent vertices~$u$ and~$v$ in~$S$ that have at most $t$~common
  neighbors. Now let~$U:=N(u)\cap N(v)$ and observe that~$|U|\le t$ by our assumption. The
  set~$S$ is not a 2-club in~$G[V\setminus U]$ because~$u$ and~$v$ are nonadjacent and
  have no common neighbors. Hence,~$S$ is not a $t$-hereditary 2-club.

  $(\Leftarrow)$ Assume that every pair of nonadjacent vertices in~$S$ has at least~$t+1$
  common neighbors. Now, let~$U\subseteq V$ be any vertex set of size at most~$t$. Consider the
  graph~$G[V\setminus U]$. We show that~$S$ is a 2-club in~$G[V\setminus U]$. Let~$u$ and~$v$ be two vertices
  of~$S\setminus U$. If~$u$ and~$v$ are adjacent, then the distance between them is one. Otherwise, the distance between them is two: $u$ and~$v$ have at least $t+1$ common neighbors in~$G$ and one of
  them is not contained in~$U$.
\end{proof}
With this lemma at hand, we can now define the three variants of well-connected 2-clubs based on a predicate over vertex pairs; these definitions are of central importance for specifying our algorithms.
\begin{definition}\label{def:compatible}
	Two vertices $v$~and~$w$ in a graph are called \emph{compatible} 
	\begin{itemize}
		\item for~$t$-robust 2-clubs if they are adjacent and have at least $t-1$ common neighbors, or if they have at least $t$ common neighbors,
		\item for~$t$-hereditary 2-clubs if they are adjacent or if they have at least $t+1$ common neighbors,
		\item for~$t$-connected 2-clubs if they are at distance at most two and are connected by at least $t$~internally vertex-disjoint paths. 
	\end{itemize}
\end{definition}
Note that if there are two compatible vertices $u$ and $v$, then this does not necessarily mean that in a graph~$G$ there is some $t$-well-connected 2-club~$S$ containing both~$u$ and~$v$, see \Cref{fig:compatible} for an example.
\begin{figure}
	\centering
	\begin{tikzpicture}[>=stealth,rounded corners=2pt, vert/.style={circle,thick,draw,inner sep=2pt,minimum size=2.5mm}]

	\foreach \x in {1,2,3}{%
		\node(a\x) at (2*\x,0) [vert] {};
	}
	\foreach[count=\i] \x in {u,v}{%
		\node(b\i) at (2*\i + 1,1) [vert] {$\x$};
		\foreach \y in {1,2,3}{
			\draw [-,thick] (a\y) -- (b\i);
		}
	}
	\end{tikzpicture}
	\caption{
		The two vertices~$u$ and~$v$ are compatible with respect to the $3$-robust 2-club model but the three other vertices are pairwise not
		compatible.
		Thus, there is no~$3$-robust 2-club in the displayed graph.
	}
	\label{fig:compatible}
\end{figure}
\cref{def:robust,def:t-conn,def:t-her,def:compatible,lem:hereditary-characterization} imply the following.
\begin{observation}\label{obs:comp-is-club}
	A graph is a $t$-robust, $t$-hereditary, or $t$-connected 2-club if and only if every pair of vertices in $G$ is compatible.
\end{observation}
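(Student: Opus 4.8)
The plan is to prove \cref{obs:comp-is-club} by unfolding each of the three definitions of compatibility given in \cref{def:compatible} and matching them directly against the corresponding definitions of the three 2-club models, treating each model in a separate short argument. The key observation is that ``$G$ is a $t$-well-connected 2-club'' is by definition a statement about \emph{all} vertices of $G$ (i.e.\ the case $S = V$), so the connectivity/common-neighbor demands phrased in \cref{def:robust,def:t-her,def:t-conn} become demands on every pair of vertices, which is exactly what the compatibility predicate encodes pairwise.

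First I would treat the $t$-robust case. By \cref{def:robust}, $G$ is a $t$-robust 2-club iff every pair of vertices is joined by $t$ internally vertex-disjoint paths of length at most two. For a single pair $v,w$, such a family of paths consists of at most one direct edge plus paths of length exactly two through distinct common neighbors; counting shows this is possible iff $v,w$ are adjacent with at least $t-1$ common neighbors, or nonadjacent with at least $t$ common neighbors. This is precisely the first bullet of \cref{def:compatible}, so the $t$-robust equivalence is immediate once this per-pair path-counting is made explicit.

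Next I would handle the $t$-hereditary case using \cref{lem:hereditary-characterization}, which already does the heavy lifting: applied to $S = V$, the lemma states that $G$ is a $t$-hereditary 2-club iff each nonadjacent pair has at least $t+1$ common neighbors, and this is verbatim the second bullet of \cref{def:compatible} (adjacent pairs being vacuously compatible). For the $t$-connected case I would likewise specialize \cref{def:t-conn} to $S = V$: being a 2-club means every pair is at distance at most two, and $t$-connectivity of $G$ is equivalent, via Menger's theorem, to every pair being joined by at least $t$ internally vertex-disjoint paths; together these two conditions are exactly the third bullet of \cref{def:compatible}. In each direction the argument is the same tautological unfolding, so it suffices to verify both implications for one representative pair.

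The main obstacle is not any of the individual equivalences---each reduces to reading off a definition---but rather ensuring the two slightly subtle correspondences are stated carefully: the path-counting in the robust case (where the ``$t-1$ versus $t$'' distinction hinges on whether the direct edge may itself serve as one of the $t$ paths) and the appeal to Menger's theorem in the connected case (to convert the global $t$-connectivity requirement, stated in \cref{def:t-conn} via vertex deletions, into the pairwise disjoint-path formulation used in \cref{def:compatible}). Once these two bookkeeping points are pinned down, the observation follows by collecting the three ``iff''s.
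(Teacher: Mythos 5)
Your proposal is correct and follows essentially the same route as the paper, which states \cref{obs:comp-is-club} without a written proof as an immediate consequence of \cref{def:robust,def:t-her,def:t-conn,def:compatible} and \cref{lem:hereditary-characterization}; your per-pair path counting for the robust case, the specialization of \cref{lem:hereditary-characterization} to $S=V$, and the Menger-type equivalence for the connected case are exactly the unfolding the authors intend.
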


\Cref{def:compatible,obs:comp-is-club} immediately give the following relation between the three concepts. 
Observe that there is an offset of 1 in the correspondence between~$t$-robust or $t$-connected 2-clubs and~$t$-hereditary 2-clubs.
\begin{observation}\label{obs:model-relations}
  A $t$-robust 2-club is a~$(t-1)$-hereditary 2-club and a $t$-connected 2-club. A
  $t$-hereditary 2-club of size at least~$t+2$ is a $(t+1)$-connected 2-club.
\end{observation}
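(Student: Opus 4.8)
The plan is to handle the three implications by comparing the compatibility predicates of \cref{def:compatible}. For the first two, \cref{obs:comp-is-club} lets me argue pointwise: a set is a $t$-robust, $(t-1)$-hereditary, or $t$-connected 2-club exactly when every pair of its vertices is compatible in the respective sense, so it suffices to show that $t$-robust compatibility of a single pair $\{v,w\}$ implies both $(t-1)$-hereditary compatibility and $t$-connected compatibility, and then to quantify over all pairs of the given set. The third implication I would instead prove directly from \cref{def:t-conn}.

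For the first implication I would take a $t$-robust-compatible pair $\{v,w\}$ and split on the two cases of \cref{def:compatible}: either $v$ and $w$ are adjacent, which already yields $(t-1)$-hereditary compatibility, or they share at least $t$ common neighbors, and since $t = (t-1)+1$ this again yields $(t-1)$-hereditary compatibility. The second implication is handled on the same pair: in the first case the edge $vw$ together with the length-two paths through the at least $t-1$ common neighbors supplies $t$ internally vertex-disjoint paths of length at most two, and in the second case the at least $t$ common neighbors already supply $t$ such paths; either way the pair is at distance at most two and is $t$-connected-compatible. Quantifying over all pairs and invoking \cref{obs:comp-is-club} gives both statements.

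For the third implication I would verify the three requirements of \cref{def:t-conn} directly for a $t$-hereditary 2-club $S$ with $|S|\ge t+2$. That $S$ is a 2-club follows from \cref{def:t-her} applied with $U=\emptyset$, and $|S|\ge t+2 > t+1$ gives the size requirement. For the connectivity requirement, fix any $U\subseteq S$ with $|U|\le t$ and any two vertices $v,w\in S\setminus U$: if they are adjacent they remain connected in $G[S\setminus U]$, and otherwise \cref{lem:hereditary-characterization} guarantees at least $t+1$ common neighbors in $G[S]$, so at least one of them survives the deletion of the at most $t$ vertices of $U$ and provides a length-two connection. Hence $G[S\setminus U]$ is connected for every such $U$, which is exactly $(t+1)$-connectivity.

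The step that needs care is the third, and it explains both the hypothesis $|S|\ge t+2$ and the offset from $t$ to $t+1$. The tempting route through \cref{def:compatible,obs:comp-is-club} breaks down here: $(t+1)$-connected compatibility demands $t+1$ internally vertex-disjoint paths for \emph{every} pair, but the $t$-hereditary model constrains only nonadjacent pairs, so for an adjacent pair one cannot read off $t+1$ such paths from the definition. Switching to the vertex-deletion formulation of connectivity sidesteps this, since a single surviving common neighbor, rather than an explicit path system, certifies connectivity; the size bound $|S|\ge t+2$ is precisely what ensures that a deleted set of at most $t$ vertices can never exhaust a set of $t+1$ common neighbors.
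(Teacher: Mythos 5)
Your proof is correct. Note that the paper offers no written proof at all: it simply asserts that the observation ``immediately'' follows from \cref{def:compatible} and \cref{obs:comp-is-club}. For the first two implications your argument is exactly that intended route---a pairwise comparison of the compatibility predicates---spelled out. For the third implication you depart from the compatibility route and argue directly from the deletion-based \cref{def:t-conn} together with \cref{lem:hereditary-characterization}, and your justification for doing so is sound: for an \emph{adjacent} pair in a $t$-hereditary 2-club there is no supply of $t+1$ common neighbors, so $(t+1)$-connected compatibility (which demands $t+1$ internally vertex-disjoint paths for every pair) cannot be read off pointwise; indeed, in the $K_{3,3}$ example of the paper the witnessing paths for adjacent pairs have length three. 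Your direct argument---a surviving common neighbor certifies connectivity after deleting at most $t$ vertices---is the clean way to close this gap, and it makes transparent why the hypothesis $|S|\ge t+2$ and the shift from $t$ to $t+1$ are needed. In short, you have supplied a complete proof where the paper relies on the reader, and the one place where the paper's ``immediate'' claim is genuinely not immediate is precisely the place where you switched to a more careful argument.
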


In \cref{fig:model-comparison} in \cref{sec:intro}, we can already see that a complete bipartite~$K_{t,t}$ is a $(t-1)$-hereditary 2-club but not even a $2$-robust 2-club.
Moreover, \cref{fig:t-connected-but-not-2-robust} depicts a $t$-connected 2-club that is neither a $2$-robust nor a~$1$-hereditary 2-club.
Thus, \cref{obs:model-relations} lists all relations between the three $t$-well-connected 2-club models we study.

\begin{figure}
	\def\maxR{3}
	\centering
	\begin{tikzpicture}[]
		\coordinate (a) at (0,0);
		\coordinate (b) at (0,1);
		\foreach[count=\nr] \i in {0,...,5}{
			\node(a\i) at ($ (b)!2cm!-\i*60+180:(a) $) [draw,thick,inner sep=2pt,minimum size=5pt] {$V_\nr$};
		}
		\node(c) at (b) [circle,draw,thick,inner sep=2pt,minimum size=5pt] {$u$};
		\foreach \i in {0,...,5}{
			\pgfmathtruncatemacro{\j}{mod(\i + 1,6)};
			\draw [-,line width=3pt] (a\i) -- (a\j);
			\draw [-,line width=2pt] (a\i) -- (c);
		}
		
		\begin{scope}[xshift=6cm]
			\def\smallDeg{4}
			\coordinate (a) at (0,0);
			\coordinate (b) at (0,1);
			\foreach \r in {1,...,\maxR}{
				\foreach \i in {0,...,5}{
					\node(a\i\r) at ($ (b)!2cm!\i*60+\smallDeg*\r*2-\smallDeg-\smallDeg*\maxR:(a) $) [circle,fill=black,inner sep=0pt,minimum size=5pt] {};
				}
			}
			\node(c) at (b) [circle,draw,thick,inner sep=2pt,minimum size=5pt] {$u$};
			\foreach \i in {0,...,5}{
				\pgfmathtruncatemacro{\j}{mod(\i + 1,6)};
				\foreach \r in {1,...,\maxR}{
					\foreach \q in {1,...,\maxR}{
						\draw [-,color=black!20] (a\i\q) -- (a\j\r);
						\draw [-,color=black!20] (a\i\q) -- (c);
					}
				}
			}
		\end{scope}
	\end{tikzpicture}
	\caption{
		Left: A schematic graph where each of~$V_1, V_2, \ldots, V_6$ represents a set of $t$~vertices, for some arbitrary~$t \ge 1$. 
		Each edge~$\{v,w\}$ in the left graph represents a complete bipartite graph with the partite sets represented by~$v$ and~$w$.
		Right: The actual graph for~$t=\maxR$.
		The left graph is a 2-club since~$u$ is adjacent to all vertices.
		Furthermore, the graph is~$t$-connected, that is, each pair of vertices is connected via $t$~internally vertex-disjoint paths.
		Thus the graph is a $t$-connected 2-club.
		Deleting~$u$ results in a graph of diameter three.
		Hence, the graph is not a $1$-hereditary 2-club and not a $2$-robust 2-club.
	}
	\label{fig:t-connected-but-not-2-robust}
\end{figure}

In our implementation, we use data reduction rules that remove vertices of low degree. The
rationale behind these rules is provided by the following observations. 
\begin{observation} \label{obs:smallDegree}
	\begin{enumerate}
		\item Neither a $t$-robust 2-club nor a $t$-connected 2-club can contain a vertex of degree less than $t$.
		\item A $t$-hereditary 2-club $S$ containing a vertex of degree less than $t+1$ is a clique.
	\end{enumerate}
\end{observation}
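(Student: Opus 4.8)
My plan is to treat the two items of \cref{obs:smallDegree} separately, since the first is a pure connectivity/degree argument while the second hinges on the common-neighbor characterization of \cref{lem:hereditary-characterization}. Throughout I read the degree of a vertex $v\in S$ as its degree inside the induced subgraph $G[S]$; this is exactly what the data reduction rules need, because $\deg_{G[S]}(v)\le \deg_G(v)$, so a vertex of too small degree in $G$ has too small degree in every candidate 2-club. I will also assume $|S|\ge 2$, so that there is some other vertex to which the connectivity constraints apply; for $|S|\le 1$ the premises are vacuous and the statement is trivial.

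For the first item I would isolate a small counting claim: if $v,w\in S$ are joined by $t$ pairwise internally vertex-disjoint paths in $G[S]$, then $v$ has at least $t$ distinct neighbors in $G[S]$. The idea is to look at the \emph{second} vertex $y_i$ of each path $P_i$, i.e.\ the neighbor of $v$ on $P_i$: for the direct edge $vw$ this second vertex is $w$, and for any longer path it is an internal vertex. At most one path can be the direct edge, and distinct longer paths contribute pairwise distinct internal vertices by internal-vertex-disjointness, so the $t$ values $y_i$ are distinct neighbors of $v$ and hence $\deg_{G[S]}(v)\ge t$. For $t$-robust 2-clubs the $t$ required paths are supplied directly by \cref{def:robust} (they are even of length at most two); for $t$-connected 2-clubs they are supplied by the compatibility condition of \cref{def:compatible} together with \cref{obs:comp-is-club}, which guarantees that every pair of vertices in a $t$-connected 2-club is joined by $t$ internally vertex-disjoint paths. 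In both cases, choosing any $w\ne v$ gives $\deg_{G[S]}(v)\ge t$, so no vertex of degree less than $t$ can occur.

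For the second item I would argue via \cref{lem:hereditary-characterization}: every nonadjacent pair in $S$ has at least $t+1$ common neighbors in $G[S]$. Let $v\in S$ have degree at most $t$. First, $v$ must be adjacent to every other vertex of $S$, for if $v$ were nonadjacent to some $w$, their $t+1$ common neighbors would all be neighbors of $v$, contradicting $\deg_{G[S]}(v)\le t$. Hence $v$ is universal in $G[S]$, so $|S|-1=\deg_{G[S]}(v)\le t$, i.e.\ $|S|\le t+1$. Now suppose, for contradiction, that $S$ is not a clique and pick a nonadjacent pair $a,b\in S$ (both necessarily distinct from the universal vertex $v$). Their $t+1$ common neighbors lie in $S\setminus\{a,b\}$, forcing $|S|\ge(t+1)+2=t+3$, which contradicts $|S|\le t+1$. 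Therefore $S$ is a clique.

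The only delicate point, and the one I would be most careful about, is the distinctness bookkeeping in the first item: I must count second vertices rather than edges, and verify that the endpoint $w$ (the second vertex of the direct edge, when present) cannot coincide with any internal vertex of another path, since endpoints are never internal vertices. Once this is pinned down the argument is short and, pleasingly, the same path-counting lemma handles both the robust and the connected case, with the offset $t$ versus $t+1$ in the hereditary case emerging naturally from \cref{lem:hereditary-characterization}.
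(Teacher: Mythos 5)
Your proof is correct and follows essentially the same route as the paper: item~1 by bounding the number of internally vertex-disjoint paths leaving a vertex by its degree, and item~2 by using \cref{lem:hereditary-characterization} to force the low-degree vertex to be universal, giving $|S|\le t+1$. The only (cosmetic) difference is in the last step of item~2, where the paper concludes that \emph{every} vertex of $S$ has degree at most $t$ and is therefore universal, while you instead derive the contradiction $|S|\ge t+3$ from a hypothetical nonadjacent pair; both finishes are valid.
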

\begin{proof}
  First, consider $t$-robust and $t$-connected 2-clubs. 
  Let~$S$ be a vertex set containing a vertex~$u$ of degree less than~$t$. 
  The number of vertex-disjoint paths (of any length) between~$u$ and any other vertex in~$S$ is less than~$t$ since~$u$ has less than~$t$ neighbors. 
  Hence,~$S$ is neither a $t$-robust 2-club nor a $t$-connected 2-club.

  We now show the claim for $t$-hereditary 2-clubs. 
  Let~$S$ be a $t$-hereditary 2-club containing a vertex~$u$ of degree less than~$t+1$. 
  It follows from \cref{def:compatible,obs:comp-is-club} that~$u$ is adjacent to all other vertices in~$S$.
  Hence, $|S| \le t+1$ and all vertices in~$S$ have also degree less than~$t+1$.
  It follows from \cref{def:compatible,obs:comp-is-club} that~$S$ is a clique.
\end{proof}
Thus, for~$t$-connected and~$t$-robust 2-clubs one may remove all vertices of degree less than~$t$.  
Furthermore, to find $t$-hereditary 2-clubs containing vertices of degree less than~$t+1$ it is sufficient to solve \textsc{Clique}:

\begin{observation} \label{cliquesGood}
	Every clique is a~$t$-hereditary 2-club for all values of~$t$.
\end{observation}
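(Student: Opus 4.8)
The plan is to derive this immediately from the characterization in \cref{lem:hereditary-characterization}. By that lemma, a vertex set~$S$ is a $t$-hereditary 2-club precisely when every \emph{nonadjacent} pair of vertices in~$S$ has at least~$t+1$ common neighbors in~$G[S]$. First I would observe that if~$S$ induces a clique, then by definition every pair of vertices in~$S$ is adjacent, so~$S$ contains no nonadjacent pair at all. Consequently, the universally quantified condition of the lemma is satisfied vacuously, and this holds regardless of the value of~$t$. Hence~$S$ is a $t$-hereditary 2-club for every~$t$.

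As an alternative that avoids invoking the characterization, one could argue directly from \cref{def:t-her}: for any~$U \subset S$ the induced subgraph~$G[S \setminus U]$ is again a complete graph, since an induced subgraph of a clique is a clique, and a complete graph has diameter at most one and is therefore a 2-club. As this holds for every~$U$ with~$|U| \le t$, the set~$S$ is a $t$-hereditary 2-club; again the argument is independent of~$t$.

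There is essentially no obstacle to overcome here: the entire content of the statement is that the defining requirement ranges over nonadjacent pairs, of which a clique has none. The only point that warrants a moment of care is the phrase ``for all values of~$t$'', but since the vacuous satisfaction (respectively the heredity of completeness) does not depend on~$t$ in any way, this causes no difficulty, and the observation follows in one line.
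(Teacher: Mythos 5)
Your proposal is correct and its main argument is exactly the paper's: invoke \cref{lem:hereditary-characterization} and note that a clique has no nonadjacent pairs, so the condition holds vacuously for every~$t$. The alternative direct argument from \cref{def:t-her} is also valid but adds nothing beyond the paper's one-line proof.
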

\begin{proof}
	Clearly, a clique~$K$ is a 2-club. 
	Since~$K$ does not contain any nonadjacent vertices,~\cref{lem:hereditary-characterization} implies that~$K$ is a $t$-hereditary 2-club. 
\end{proof}

\cref{cliquesGood} implies that the largest $t$-hereditary 2-club is at least as large as a maximum-cardinality clique.
In contrast, for every graph there are values of~$t$ such that the graph neither contains a $t$-robust 2-club nor a~$t$-connected 2-club. 
For example a clique of~$n$ vertices is not $n$-robust and also not~$n$-connected.

For~$t=n-1$, where~$n$ is the number of graph vertices, we always obtain the clique model as the following observation implies.
\begin{observation} \label{smallClubs}
	A $t$-hereditary 2-club~$S$ with at most $t+2$ vertices is a clique. 
	A $t$-robust/$t$-connected 2-club~$S$ with at most $t+1$ vertices is a clique of size~$t+1$.
\end{observation}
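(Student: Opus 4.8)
The plan is to treat the two assertions separately, invoking the characterization in \cref{lem:hereditary-characterization} for the first and \cref{obs:smallDegree} for the second. Both reduce to short counting arguments once the right structural fact is in place.

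For the hereditary part, I would argue by contradiction. Suppose $S$ is a $t$-hereditary 2-club with $|S|\le t+2$ that is not a clique. Then $S$ contains a nonadjacent pair~$u,v$, and by \cref{lem:hereditary-characterization} these two vertices have at least $t+1$ common neighbors in~$G[S]$. Each such common neighbor is adjacent to both~$u$ and~$v$ and hence differs from each of them, so the $t+1$ common neighbors together with~$u$ and~$v$ are $t+3$ pairwise distinct vertices of~$S$. This gives $|S|\ge t+3$, contradicting $|S|\le t+2$. Therefore $S$ must be a clique.

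For the robust/connected part, the key input is \cref{obs:smallDegree}, which guarantees that every vertex of a $t$-robust or $t$-connected 2-club~$S$ has degree at least~$t$ in~$G[S]$. Since the maximum possible degree in~$G[S]$ is $|S|-1$, this forces $|S|-1\ge t$, i.e.\ $|S|\ge t+1$. Combined with the hypothesis $|S|\le t+1$ we obtain $|S|=t+1$. Then every vertex has degree exactly $t=|S|-1$ in~$G[S]$, so each vertex is adjacent to all others; that is, $S$ is a clique, and its size is exactly~$t+1$.

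I do not anticipate a real obstacle: each claim is an elementary pigeonhole-style count. The only point needing minor care is to confirm that the degree bound of \cref{obs:smallDegree} is measured inside the induced subgraph~$G[S]$---as its proof indeed is, since it counts vertex-disjoint paths within~$G[S]$---because the entire argument for the second assertion hinges on comparing this degree against the upper bound $|S|-1$.
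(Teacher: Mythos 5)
Your proof is correct and takes essentially the same route as the paper's: the hereditary case is the same counting argument (only $t$ vertices of $S$ besides a nonadjacent pair could serve as their common neighbors, so \cref{lem:hereditary-characterization} forces adjacency), and the robust/connected case is the same degree argument via \cref{obs:smallDegree} and the bound $\deg \le |S|-1$. Your closing remark about the degree being measured inside $G[S]$ is a fair point of care, and the paper's proof of \cref{obs:smallDegree} indeed supports that reading.
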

\begin{proof}
  First consider a $t$-hereditary 2-club~$S$ with at most $t+2$ vertices.  
  For any pair of vertices of~$S$, there are only $t$ further vertices in $S$ which potentially can be common neighbors for both.  
  Therefore, by \cref{def:compatible}, any two vertices in $S$ can only be compatible if they are adjacent. 
  Thus, $S$ is a clique.

  For $t$-robust 2-club and $t$-connected 2-clubs the statement follows from \cref{obs:smallDegree} and the fact that the maximum degree in a graph on~$t+1$ vertices is at most~$t$.
\end{proof}

\section{Complexity Classification and Exact Algorithms}
\label{sec:param-comp}
In this section, we show that all three $t$-well-connected 2-club detection problems are NP-complete for all fixed~$t$. 
In the case of $t$-robust and $t$-hereditary 2-clubs, we observe that the NP-completeness also holds for graphs that share properties with scale-free networks. 
The NP-hardness of the problems motivates the development of fixed-parameter algorithms for them.
Fixed-parameter algorithms are exact algorithms whose running time is polynomial in the
input size but usually exponential in a problem-specific
parameter~\cite{Cyg15,downeyfellows,Nie06,FG06}.  We will describe such an algorithm for
all three problems and show that, in the case of \tRobTwoClub and \tHerTwoClub, it is likely to be optimal with respect to worst-case running time. 
Subsequently, we will show how to solve all three problems by solving~$O(n)$ smaller instances of the
problem. 
This, together with the fixed-parameter algorithm, will serve as the basis for our implementation, providing an exact combinatorial algorithm for finding well-connected 2-clubs.

\subsection{Hardness Results} \label{NPhard} 
By a polynomial-time reduction from $2$-\textsc{Club}, we show, for all values of~$t$, the NP-hardness of finding~$t$-hereditary and $t$-robust 2-clubs.
\begin{theorem} \label{NPc}
	\tRobTwoClub is NP-complete for every~$t\ge 1$.
	\tHerTwoClub is NP-complete for every~$t\ge 0$.
\end{theorem}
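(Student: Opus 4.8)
The plan is to reduce from \textsc{2-Club}, which is known to be NP-hard, and to show membership in NP separately. Membership is routine: given a candidate set~$S$, one can verify in polynomial time that every pair of vertices in~$S$ is compatible according to \cref{def:compatible} (counting common neighbors, or computing vertex-disjoint paths of length at most two via a small flow/matching computation), so by \cref{obs:comp-is-club} we can check the 2-club property in polynomial time for each model. Hence the focus is the hardness construction.

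For the hardness direction, the natural idea is a gadget that turns an arbitrary graph into one where the well-connectedness demand is forced to hold \emph{between} the ``real'' vertices so that the only obstruction to being a $t$-well-connected 2-club is the original diameter-two (2-club) condition. Concretely, given a \textsc{2-Club} instance~$(G,k)$, I would construct~$G'$ by taking~$G$ and attaching a large ``padding'' structure — for instance, a clique or a set of~$t$ (or~$t+1$) universal vertices adjacent to all of~$V(G)$ — so that every pair of original vertices that was adjacent or had a common neighbor now automatically acquires the extra~$t-1$, $t$, or~$t+1$ common neighbors required for compatibility in the robust/hereditary sense, while genuinely distance-$\geq 3$ pairs remain incompatible. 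The size threshold~$k'$ would be set to~$k$ plus the number of padding vertices, and one then argues that~$G'$ has a $t$-robust (resp.\ $t$-hereditary) 2-club of size~$k'$ if and only if~$G$ has a 2-club of size~$k$. Using the compatibility characterization from \cref{obs:comp-is-club} and \cref{lem:hereditary-characterization} keeps the equivalence proof clean, since one only needs to check pairwise compatibility rather than reason about all vertex-deletion sets directly.

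The two directions of the equivalence would go as follows. For the forward direction, a 2-club~$S$ in~$G$ together with all padding vertices should form a set in which every pair is compatible: pairs involving padding vertices are compatible because padding vertices are universal (hence high common-neighborhood), and original pairs in~$S$ are at distance at most two in~$G$, so the padding supplies enough extra common neighbors to meet the~$t$-threshold. For the reverse direction, from a large well-connected 2-club~$S'$ in~$G'$ I would argue that its restriction to original vertices,~$S'\cap V(G)$, is a 2-club in~$G$: if two original vertices in~$S'$ were at distance~$\geq 3$ in~$G$, their only possible common neighbors in~$G'$ are padding vertices, and the construction must be tuned so that this does not by itself certify compatibility for distance-$\geq 3$ pairs — this is exactly the delicate point.

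The main obstacle, and where the construction must be designed carefully, is ensuring the \emph{separation}: padding must be generous enough that every genuine 2-club pair clears the~$t$-threshold, yet not so generous that padding vertices alone can bridge an original distance-$\geq 3$ pair and make it spuriously compatible. For the robust and connected models, a common neighbor that is a padding vertex still counts toward the vertex-disjoint path count, so one likely needs the padding to be adjacent to original vertices in a controlled way (e.g.\ each padding vertex adjacent to all of~$V(G)$ is fine because two distance-$\geq 3$ vertices~$u,v$ do share all padding vertices as common neighbors, which is problematic) — so more plausibly the padding attaches to only the edges/pairs that were already compatible, or one enforces the threshold by a separate clique whose vertices are \emph{not} common neighbors of arbitrary original pairs. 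I would resolve this by making the extra common neighbors available only through existing adjacency in~$G$ (for example by substituting each vertex with a small blob or by subdividing/cloning so that shared neighbors appear precisely for original 2-club pairs), and I expect the bulk of the proof to be verifying this separation property and then propagating it through the size-threshold bookkeeping; the SETH-based lower bound mentioned in the contributions would then follow by making the reduction size-preserving enough to transfer a known SETH lower bound for \textsc{2-Club}.
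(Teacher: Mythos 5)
Your overall plan — reduce from \textsc{2-Club} and pad with universal vertices — is exactly the paper's approach, and your NP-membership argument is fine. But there is a genuine gap in the hardness direction: you correctly identify the universal-vertex construction, then talk yourself out of it. You claim that making the padding vertices adjacent to all of $V(G)$ is ``problematic'' because two original vertices at distance $\ge 3$ would share all padding vertices as common neighbors, and on that basis you retreat to an unspecified alternative (``blobs,'' subdividing, cloning, or padding attached only to already-compatible pairs) that you never work out. The worry is unfounded precisely because the count is calibrated to fall \emph{one short} of the compatibility threshold: for \tHerTwoClub one adds exactly $t$ universal vertices, so a distance-$\ge 3$ pair (which has no common neighbor in $G$) ends up with exactly $t$ common neighbors while compatibility requires $t+1$; for \tRobTwoClub one adds exactly $t-1$ universal vertices, so such a pair is nonadjacent with $t-1$ common neighbors while compatibility requires $t$. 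So distance-$\ge 3$ pairs remain incompatible, distance-$\le 2$ pairs become compatible, and no further gadgetry is needed. Your proposal never pins down these exact counts (you hedge between $t$ and $t+1$) and explicitly leaves the ``delicate point'' unresolved, so as written the reduction is not established.

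You are also missing the backward direction, which is where the calibration pays off a second time. For the hereditary case it is immediate from \cref{def:t-her}: the padding set has size exactly $t$, so deleting it from any $t$-hereditary 2-club $S'$ of size $\ge k+t$ must leave a 2-club of size $\ge k$ in $G$. For the robust case one argues that any nonadjacent pair in $S:=S'\cap V$ has $t$ internally vertex-disjoint length-$\le 2$ paths in $G'[S']$, of which at most $t-1$ can pass through the $t-1$ padding vertices, so at least one lies entirely in $S$ and witnesses a common neighbor in $G$. Neither of these arguments appears in your proposal, and the alternative constructions you gesture at (subdivision, vertex blobs) would in fact break distances and require a fresh, nontrivial analysis. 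The fix is simply to commit to the universal-vertex construction with the exact counts above.
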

\begin{proof}
  Containment in NP is obvious. 
  The NP-hardness of \textsc{$1$-Robust $2$-Club} and \textsc{$0$-Hereditary $2$-Club} is already known as these problems are exactly the 2-\textsc{Club} problem~\cite{bourjolly_exact}.
  The NP-hardness of both problems for the remaining values of~$t$ can be shown via simple polynomial-time reductions from the 2-\textsc{Club} problem, which was shown to be
  NP-complete by \citet{bourjolly_exact}.
  
  For any 2-\textsc{Club} instance~$(G, k)$ we construct an equivalent \tHerTwoClub~instance~$(G', k+t)$, $t > 0$, where~$G' = (V', E')$ is obtained from~$G$ by adding a
  set~$V^*=\{v^{*}_{1}, \ldots , v^{*}_{t}\}$ of $t$ further~vertices and making them adjacent to all vertices of $G$ and to each other. 
  This can obviously be done in polynomial time. 
  It remains to be shown that $G$ has a 2-club of size at least~$k$ if and only if $G'$ has a $t$-hereditary 2-club of size at least~$k+t$. 

  First, consider a 2-club~$S$ in $G$. Then, the set~$S \cup V^{*}$ is a $t$-hereditary 2-club in $G'$: 
  Only vertices in~$S$ can be pairwise nonadjacent.
  Moreover, each pair of nonadjacent vertices in~$S$ has at least one common neighbor in~$S$ and~$t$ common neighbors in~$V^*$. 
  Thus, by \cref{def:compatible,obs:comp-is-club}, $S \cup V^{*}$ is a $t$-hereditary 2-club.

  Conversely, consider a $t$-hereditary 2-club $S'$ of size at least~$k+t$ in~$G'$. 
  Note that~$|V^*| = t$ and thus, by \Cref{def:t-her}, $S' \setminus V^*$ is a 2-club of size at least~$k$ in~$G$.

  For~\tRobTwoClub, $t > 1$, we adapt the construction by adding to~$G$ a
  set~$V^*=\{v^{*}_{1}, \ldots , v^{*}_{t-1}\}$ of $t-1$ further~vertices and making them
  adjacent to all vertices of~$V\cup V^*$.  

If~$G$ has a 2-club~$S$ of size at least~$k$,
  then~$S\cup V^*$ is a $t$-robust 2-club of size at least~$k+{t-1}$ in~$G'$: Every pair of
  vertices from~$S$ is connected by at least~$t$ internally vertex-disjoint paths of
  length two in~$G'[S\cup V^*]$ since they have~$t-1$ such paths via the vertices in~$V^*$
  and are either adjacent or have a common neighbor in~$S$. 

Conversely, let~$S'$ be a
  $t$-robust 2-club of size at least~$k+t-1$ in~$G'$ and let~$S:=S'\cap V$. Then,~$S$ is a 2-club
  of size at least~$k$ in~$G$: 
  Every pair of nonadjacent vertices in~$S$ is connected by $t$ vertex-disjoint paths of length two in~$G'[S']$.
  At least one of these paths contains only vertices from~$S$ since~$|S'\setminus S|\le t-1$. Hence, these nonadjacent vertices
  have a common neighbor in~$S$.
  
   Altogether we have polynomial-time reductions from 2-\textsc{Club} to \tHerTwoClub for every~$t\ge 0$ and from 2-\textsc{Club} to \tRobTwoClub for every~$t\ge 1$.
\end{proof}

The above reduction leaves the graph almost unchanged and thus many NP-hardness results that were obtained for 2-\textsc{Club} transfer almost directly to \tHerTwoClub and~\tRobTwoClub. 
We list here the hardness results for graphs with properties that are typical for certain real-world networks. %
For instance, the class of split graphs finds applications in social networks in the context of ``core-periphery models''~\cite{BE00}.

The graph classes mentioned in the subsequent corollary are defined as follows.  
A graph~$G=(V,E)$ is a \emph{split graph} if~$V$ can be partitioned into~$V_1$ and~$V_2$ such that~$G[V_1]$ is a clique and~$G[V_2]$ is an independent set.  
A graph has \emph{domination number~$1$} if there is a vertex that is adjacent to all other vertices.  
A graph~$G$ has \emph{degeneracy~$d$} if every subgraph of~$G$ contains a vertex of degree at most~$d$.

\begin{corollary} \label{structuralParam}
	\tHerTwoClub and \tRobTwoClub are NP-complete for~$t \ge 1$ on 
	\begin{enumerate}[label=\textnormal{(\arabic*)}]
		\item graphs with both diameter two and domination number one,\label{item:diam}
		\item connected graphs with average vertex degree at most $\alpha$, for all~$\alpha > 2$,\label{item:avg-deg}
		\item graphs with degeneracy $6 + t$, and \label{item:degen}
		\item split graphs. \label{item:split-graphs}
	\end{enumerate}
\end{corollary}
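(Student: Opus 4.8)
The plan is to reuse the reduction of \cref{NPc} essentially verbatim and to track, in each of the four cases, how its only structural effect --- adding the universal clique~$V^*$ of size~$t$ (for \tHerTwoClub) or~$t-1$ (for \tRobTwoClub) --- interacts with the target graph class. Where the added clique by itself already forces the desired property, I reduce from general \textsc{2-Club}; where the target class instead demands a sparse or structured starting point, I reduce from a known NP-hard \emph{restriction} of \textsc{2-Club} and merely verify that attaching~$V^*$ keeps the instance inside that restriction. Throughout, correctness of the forward and backward directions is inherited from \cref{NPc}; only membership in the respective class has to be checked.

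For \ref{item:diam}, I would note that every vertex of~$V^*$ is adjacent to all other vertices of~$G'$, so as soon as~$V^*\neq\emptyset$ the graph~$G'$ has a dominating vertex, hence domination number one, and consequently every vertex pair has a common neighbor, giving diameter two. The point is that diameter two trivializes plain \textsc{2-Club} but \emph{not} the hereditary and robust variants (deleting the dominating vertices can still destroy the property), which is exactly why the reduction from general \textsc{2-Club}, rather than a prior diameter-two result, is what is needed here. For \ref{item:degen}, I would start from a known NP-hardness of \textsc{2-Club} on graphs of degeneracy six~\cite{hartung_structural} and place the vertices of~$V^*$ last in a degeneracy ordering of~$G$: each original vertex then gains at most~$|V^*|$ later neighbors, and each vertex of~$V^*$ has fewer than~$|V^*|$ later neighbors, so~$G'$ has degeneracy at most~$6+t$. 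For \ref{item:split-graphs}, I would start from a \textsc{2-Club} instance that is itself a split graph with clique part~$V_1$ and independent part~$V_2$; since~$V^*$ is a clique adjacent to everything, $V^*\cup V_1$ is again a clique while~$V_2$ stays independent, so~$G'$ remains split.

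Case \ref{item:avg-deg} is the one that does not transfer directly, because adjoining the universal clique makes~$G'$ dense. Here I would first apply \cref{NPc} and then \emph{dilute}: attach to one vertex of~$G'$ a long induced path~$P$ with~$L$ vertices. The graph stays connected, and its average degree~$2(|E(G')|+L)/(|V(G')|+L)$ tends to~$2$ as~$L\to\infty$, so a polynomially large~$L$ (in~$|V(G')|$ and~$1/(\alpha-2)$) forces it below any fixed~$\alpha>2$. It then remains to rule out spurious solutions: every path vertex has degree at most two, so by \cref{obs:smallDegree} it can only lie in a well-connected 2-club in degenerate ways --- for \tHerTwoClub only inside a clique, of which~$P$ contains none larger than an edge, and for \tRobTwoClub with~$t\ge2$ not at all; the sole surviving case~$t=1$ (plain \textsc{2-Club}) admits 2-clubs of at most three vertices inside a path. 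Moreover no solution can straddle~$G'$ and~$P$, since the vertices of~$V^*$ share no neighbor with the interior of~$P$. Choosing~$k$ larger than this constant (harmless, as \textsc{2-Club} is already hard for large~$k$) guarantees that every size-$\ge k$ solution lies entirely in~$G'$.

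The main obstacles I anticipate are twofold. First, in case \ref{item:avg-deg} the argument that \emph{no} large well-connected 2-club crosses the interface between~$G'$ and the diluting path has to be carried out carefully, leaning on \cref{obs:smallDegree} together with the facts that~$P$ is triangle-free and attached at a single vertex. Second, cases \ref{item:degen} and \ref{item:split-graphs} stand or fall with the availability of the right restricted hardness of plain \textsc{2-Club} --- degeneracy six for \ref{item:degen} and split graphs for \ref{item:split-graphs} --- so the crux is to secure (by citing prior work or, for split graphs, by a direct construction) a \textsc{2-Club} instance whose output already has the required structure; once that is in place, the clique~$V^*$ preserves the structure automatically and the whole corollary follows from \cref{NPc}.
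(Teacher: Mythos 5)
Your proposal matches the paper's proof essentially step for step: items \ref{item:diam}, \ref{item:degen}, and \ref{item:split-graphs} follow by observing that the universal clique~$V^*$ from the reduction of \cref{NPc} preserves (or creates) the required class membership, starting from the known restricted hardness of \textsc{2-Club} on degeneracy-six graphs and on split graphs, and item \ref{item:avg-deg} is handled exactly as in the paper by attaching a long pendant path to dilute the average degree and arguing via \cref{obs:smallDegree} that path vertices cannot join any large solution. The argument is correct and no genuinely different route is taken.
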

\begin{proof}
Regarding~\ref{item:diam}, note that the graph $G'$ constructed by the reduction in the proof of \cref{NPc} has at least one vertex adjacent to all other vertices and thus diameter two and domination number one. %

Regarding~\ref{item:avg-deg}, note that the statement is known for $1$-robust 2-clubs, that is, standard 2-clubs~\cite{hartung_structural}.
Thus, we discuss subsequently $t$-hereditary 2-clubs for~$t\ge 1$ and $t$-robust 2-clubs for~$t \ge 2$. 
We add a long path to the graph obtained by the construction in the proof of \cref{NPc} and add an edge from one endpoint of the path to an arbitrary vertex in the original graph.
This decreases the average degree to any constant arbitrarily close to~2. 
No vertex of the path can be contained in any $t$-hereditary ($t\ge 1$) or $t$-robust ($t \ge 2$) 2-club of size at least three.
Since we can assume that the solution size~$k$ is at least three, we obtain an equivalent instance with the desired average degree. 
This adjusted reduction shows NP-hardness for any average degree greater than two.

Regarding \ref{item:degen}, note that 2-\textsc{Club} is NP-hard on graphs with degeneracy six~\cite{hartung_structural}. 
Adding $t$ vertices that are adjacent to all other vertices increases the degeneracy by at most $t$,
therefore \tHerTwoClub and \tRobTwoClub are NP-hard on graphs with degeneracy $6 + t$. %

Regarding \ref{item:split-graphs}, observe that if the graph~$G$ in the 2-\textsc{Club} instance is a split graph, then also the graph~$G'$ constructed by the reduction in the proof of \cref{NPc} is a split graph.
This implies the NP-hardness of \tHerTwoClub and~\tRobTwoClub on split graphs since 2-\textsc{Club} remains NP-hard on split graphs~\cite{asahiro}.
\end{proof}
\tConTwoClub is known to be NP-hard for~$t\le 2$~\cite{BBT05,YPB17}; for the sake of
completeness, we show hardness for all constant~$t$. Observe in this context that the reduction
in the proof of Theorem~\ref{NPc} does not work for \tConTwoClub: Adding $t-1$ common neighbors
for all vertices of~$G$ results in a~$t$-connected graph of diameter two.
\begin{theorem}
  \tConTwoClub is NP-complete, even on split graphs, for all~$t\ge 1$.
\end{theorem}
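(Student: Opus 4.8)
The plan is to reduce from \textsc{Clique}, which is NP-complete, using an incidence-style split graph chosen so that the $t$-connectivity requirement is met ``for free'' by a large clique part, while the $2$-club requirement (cf.\ \cref{def:t-conn}) encodes the clique constraint. Containment in NP is immediate: given a candidate set~$S$ one checks the diameter by breadth-first search and verifies $t$-connectivity in polynomial time via a standard maximum-flow (Menger) computation.

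For the hardness, given a \textsc{Clique} instance~$(H,k)$ with~$H=(V_H,E_H)$, I would first dispose of the easy case: if~$k\le t$ the instance can be decided in~$O(n^t)$ time by brute force, so I may assume~$k\ge t+1$. I then build a split graph~$G$ whose clique part~$C=\{x_e \mid e\in E_H\}$ contains one vertex per edge of~$H$ (all pairwise adjacent) and whose independent part~$I=\{y_v\mid v\in V_H\}$ contains one vertex per vertex of~$H$, joining~$y_v$ to~$x_e$ exactly when~$v\in e$. This is a split graph by construction. The target size is set to~$K:=|E_H|+k$, and the claim is that~$G$ has a $t$-connected $2$-club of size at least~$K$ if and only if~$H$ has a clique of size at least~$k$.

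The two directions I would argue as follows. For the forward direction, from a clique~$W\subseteq V_H$ with~$|W|\ge k$ I take~$S:=C\cup\{y_v\mid v\in W\}$. The $2$-club property is routine: clique-part vertices are mutually adjacent, each~$x_e$ reaches every~$y_v$ within two steps through an edge-vertex incident to~$v$, and two vertices~$y_u,y_v$ share the common neighbor~$x_{\{u,v\}}\in C$ because~$\{u,v\}\in E_H$. For the backward direction, from a $t$-connected $2$-club~$S$ I set~$W:=\{v\mid y_v\in S\}$ and~$F:=\{e\mid x_e\in S\}$; since every nonadjacent pair~$y_u,y_v$ in~$S$ must have a common neighbor in~$S$, and the only candidate is~$x_{\{u,v\}}$, the set~$W$ is a clique of~$H$, and~$|W|=|S|-|F|\ge K-|E_H|=k$.

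The step I expect to be the main obstacle is verifying that the set~$S$ constructed in the forward direction is actually $t$-connected, since this is exactly the feature that the naive ``universal vertex'' reduction fails to control. Here I would exploit the assumption~$k\ge t+1$: then~$|E_H|\ge\binom{t+1}{2}\ge t+1$, so~$C$ is a clique on more than~$t$ vertices and hence highly connected, and every~$y_v$ with~$v\in W$ has at least~$|W|-1\ge t$ neighbors in~$C$ (consistent with \cref{obs:smallDegree}). A short min-cut argument then shows that deleting fewer than~$t$ vertices can neither disconnect the clique core~$C$ nor cut any~$y_v$ off from it, so~$G[S]$ stays connected and~$S$ is $t$-connected as required. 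Since the construction manifestly preserves the split-graph property and runs in polynomial time, this yields NP-completeness on split graphs for every~$t\ge 1$.
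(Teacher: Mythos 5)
Your proof is correct and follows essentially the same route as the paper: it is the identical incidence split-graph construction (edge-vertices forming the clique side, original vertices the independent side) with target $k+|E|$ and the same forward/backward arguments, the only difference being that the paper secures the $t$ neighbors in the clique part by restricting \textsc{Clique} to instances of minimum degree at least $t$, while you instead assume $k\ge t+1$ (brute-forcing smaller $k$) and use the $k-1\ge t$ edge-vertices of the clique $W$ itself. The one blemish is the claim $\binom{t+1}{2}\ge t+1$, which fails for $t=1$; but your cut argument only needs $|C|\ge t$ so that removing fewer than $t$ vertices leaves the clique part nonempty, and $\binom{t+1}{2}\ge t$ holds for all $t\ge 1$, so nothing breaks.
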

\begin{proof}
  Containment in NP is obvious. To show NP-hardness, we use a known reduction from
  \textsc{Clique} to \textsc{2-Club}~\cite{BBT05}. We assume, however, that in the
  \textsc{Clique} instance every vertex has at least~$t$ neighbors; obviously, \textsc{Clique} remains
  NP-hard with this restriction. Given such an instance~$(G=(V,E),k)$ of \textsc{Clique},
  construct an instance~$G'=(V',E')$ of \tConTwoClub as follows. The vertex set~$V'$ consists
  of~$V$ and one vertex for each edge of~$G$. More formally,~$V':=V\cup V_E$ where~$V_E:=\{v_e\mid e\in E\}$. Now construct the
  edge set~$E'$: First, make~$V_E$ a clique. Second, for each~$v\in V$
  and each edge~$e$ that is incident with~$v$ in~$G$, add an edge between~$v$ and~$v_e$. The
  graph~$G'$ is a split graph and can obviously be constructed in polynomial time. We conclude
  the proof by showing that~$G$ has a clique of size at least~$k$ if and only if~$G'$ has a
  $t$-connected 2-club of size at least~$k+|E|$.

  Let~$S$ be a clique in~$G$. Then~$S\cup V_E$ is a $t$-connected 2-club of size at
  least~$k+|E|$ in~$G'$: 
  $t$-connectivity follows from the fact that, by the assumption on the minimum degree in~$G$, each vertex of~$S$ has $t$ neighbors in the large clique~$V_E$. 
  The 2-club property follows from the fact that each pair of vertices~$u, v \in S$ has a common neighbor in~$V_E$ since~$u$ and~$v$ are adjacent in~$G$.

  Conversely, consider a $t$-connected 2-club~$S'$ of size at least~$k+|E|$
  in~$G'$. Consider~$S:=S'\cap V$. Clearly,~$|S|\ge k$. Moreover,~$S$ is a clique in~$G$: every
  pair of vertices~$u$ and~$v$ in~$S$ has a common neighbor in~$G'$. This common neighbor
  represents an edge of~$G$ that is incident with~$u$ and~$v$.
\end{proof}
\subsection{Fixed-Parameter Algorithm with Respect to the Dual Parameter n-k}\label{dualParam}
We now describe a fixed-parameter algorithm for all three problems which is the basis for our implementation. 
The general idea---branching on vertices that are incompatible---was introduced by~\citet{bourjolly_exact} for \textsc{2-Club}.
The analysis of the algorithm exploits the parameter~$\ell:=n-k$, that is, the number of
vertices not contained in a largest $t$-well-connected 2-club.

\begin{theorem} \label{2tC-VD} 
	For~$\ell := n-k$, \tHerTwoClub and \tRobTwoClub can be solved in $O(2^{\ell}\cdot nm)$~time and \tConTwoClub can be solved in~$O(2^{\ell}\cdot t^2nm)$ time. 
\end{theorem}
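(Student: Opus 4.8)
The plan is to design a branching algorithm that repeatedly finds an incompatible pair of vertices and branches on deleting one of them, following the template of Bourjolly et al.\ for \textsc{2-Club}. The key structural fact making this work is \cref{obs:comp-is-club}: a vertex set $S$ is a $t$-well-connected 2-club (in any of the three senses) if and only if every pair of vertices in $G[S]$ is compatible. So the algorithm works as follows. Given the current graph, test whether \emph{every} pair of remaining vertices is compatible. If so, by \cref{obs:comp-is-club} the current vertex set is already a $t$-well-connected 2-club and we can stop (recording its size). Otherwise there exist two incompatible vertices $u$ and $v$; since every solution $S$ must have all pairs compatible, $S$ cannot contain both $u$ and $v$. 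Thus any solution omits at least one of $u,v$, and we branch into the two cases ``delete $u$'' and ``delete $v$,'' recursing on $G-u$ and $G-v$ respectively.

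First I would bound the number of leaves in the search tree. Each branching step deletes one vertex, and the recursion can stop once we have deleted $\ell = n-k$ vertices (since a solution of size at least $k$ omits at most $\ell$ vertices; if we have already deleted more than $\ell$ vertices along a branch, that branch cannot yield a solution of size $\ge k$). The branching is binary and the depth is at most $\ell$, so the search tree has at most $2^{\ell}$ leaves and $O(2^{\ell})$ nodes total. The correctness of the whole procedure follows from the fact that the branching is exhaustive: every solution must drop at least one of each incompatible pair, so the recursion explores all viable deletion choices.

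The remaining work, and the source of the differing running-time factors, is bounding the cost per search-tree node, which amounts to the compatibility test. For $t$-robust and $t$-hereditary 2-clubs, \cref{def:compatible} shows that compatibility of a pair $\{u,v\}$ is decided purely by adjacency and the number of common neighbors $|N(u)\cap N(v)|$. Computing common-neighbor counts for all pairs, or finding a single incompatible pair, can be done in $O(nm)$ time by iterating over vertices and their neighborhoods. Multiplying by the $O(2^{\ell})$ nodes gives the $O(2^{\ell}\cdot nm)$ bound. For $t$-connected 2-clubs the compatibility predicate is more expensive: by \cref{def:compatible} one must certify $t$ internally vertex-disjoint paths of length at most two between $u$ and $v$, which is a bounded-length-path vertex-disjoint connectivity question solvable via a small flow/matching computation. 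I would argue this test costs an extra $O(t^2)$ factor per pair—intuitively because verifying $t$ disjoint length-$\le 2$ paths reduces to finding a matching or flow of value $t$ in a bipartite-like gadget built from the common neighbors and the (possible) direct edge—yielding the $O(2^{\ell}\cdot t^2 nm)$ bound.

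\textbf{The main obstacle} I expect is making the $O(t^2)$ claim for the $t$-connected case precise. One must argue that deciding the existence of $t$ internally vertex-disjoint paths of length at most two between a fixed pair can be performed efficiently and that running this over all pairs (or just until an incompatible pair is found) fits within $O(t^2 nm)$ overall per node. The natural approach is a Menger-type / maximum-flow argument on a gadget with $u$, $v$, and their common neighbors, where unit vertex capacities enforce internal disjointness and the length restriction is handled by only allowing length-one and length-two routes; here the flow value we seek is just $t$, so augmenting-path flow terminates after $t$ augmentations, and one must verify each augmentation and the gadget construction stay within the claimed budget. The robust and hereditary cases are comparatively routine once the branching framework and leaf-count bound are in place.
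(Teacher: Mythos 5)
Your branching framework, the $2^{\ell}$ leaf bound, and the $O(nm)$ per-node compatibility test for the robust and hereditary cases all match the paper's argument. The genuine gap is in the $t$-connected case: you assert that compatibility there requires ``$t$ internally vertex-disjoint paths \emph{of length at most two},'' but that is the predicate for $t$-\emph{robust} 2-clubs (\cref{def:robust}). For $t$-connected 2-clubs, \cref{def:compatible} only requires distance at most two plus $t$ internally vertex-disjoint paths of \emph{unrestricted} length, because \cref{def:t-conn} demands ordinary $t$-vertex-connectivity of $G[S]$. Consequently, your proposed local gadget on $u$, $v$, and their common neighbors computes the wrong quantity: in a $C_5$, two nonadjacent vertices have a single common neighbor, so your test would declare them incompatible for $t=2$, yet the $C_5$ is a $2$-connected 2-club (the paper makes exactly this point, and \cref{fig:t-connected-but-not-2-robust} generalizes it). Any correct test must account for long connecting paths, so a purely local length-$\le 2$ flow/matching construction cannot work.

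What the paper does instead is split the $t$-connected check into two global steps per node: first spend $O(nm)$ time to look for a vertex pair at distance at least three (such a pair is incompatible and can be branched on); if none exists, run Galil's vertex-connectivity algorithm, which in $O(t^2 nm)$ time either certifies that the current graph is $t$-connected or returns a pair of vertices separated by fewer than $t$ vertices, which is then the incompatible pair to branch on. This is where the $t^2$ factor in the stated bound actually comes from --- it is the cost of one global connectivity computation per search-tree node, not a per-pair overhead. Your accounting of ``$O(t^2)$ per pair over all pairs fits within $O(t^2 nm)$'' would also not obviously hold even if the local predicate were the right one. The robust and hereditary parts of your write-up are fine as they stand.
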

\begin{proof}
\begin{algorithm}[t!]\small
	\caption{A fixed-parameter algorithm computing a maximum-size $t$-well-connected 2-club parameterized by $\ell:=n-k$.}
	\label{alg:dual-param}
	\KwIn{A graph~$G = (V,E)$ and positive integers~$t$ and~$\ell$.}
	\KwOut{True if~$G$ contains $t$-well-connected 2-club of size at least~$n-\ell$, false otherwise.}
	\SetKwFunction{branch}{branch}
	\SetKwProg{func}{Function}{}{}
	\branch{$G$, $\ell$}\;
	\func{\branch{$G$, $\ell$}}{
		\lIf(){$G$ is a $t$-well-connected 2-club}{\KwRet{true}}
		\lIf(){$\ell = 0$}{\KwRet{false}}
		$u,v \gets $ two incompatible vertices\;
		\KwRet{\branch{$G - u$, $\ell-1$}${}\vee{}$\branch{$G - v$, $\ell-1$}}\;
	}
\end{algorithm}
	The algorithm is a simple branching algorithm that, as long as two vertices are incompatible, branches into deleting either of them; see \Cref{alg:dual-param} for a pseudocode.
	This procedure leads to a search tree with at most $2^{\ell}$ leaves. To determine
        in Line~3 whether the current graph already fulfills the desired property, we check
        whether there are any incompatible vertices in the graph. Thus, the only
        difference between the three algorithms is the definition of compatibility, that
        is, in Lines~3 and~5 we need different algorithms to find incompatible vertices.

        For~\tRobTwoClub and for~\tHerTwoClub, we first count for each vertex pair the
        number of common neighbors. This can be done in~$O(nm)$ time by considering for
        each vertex in~$G$ each pair of neighbors. An incompatible pair can now be found in~$O(n^2)$ time
         by considering each vertex pair and applying
        Definition~\ref{def:compatible}.
        
        For~\tConTwoClub, we first determine in~$O(nm)$ time whether there is a vertex
        pair that has distance at least three in~$G$. If yes, then the two vertices are
        nonadjacent. 
        Otherwise, we apply the algorithm of~Galil~\cite{Gal80} that determines in~$O(t^2nm)$ time whether the graph is~$t$-connected or to find a pair of vertices~$u$ and~$v$ that are in different $t$-connected components if this is not the case.
\end{proof}
We next prove that \Cref{alg:dual-param} is likely to be asymptotically optimal for~\tRobTwoClub and~\tHerTwoClub by showing that there can be no algorithm with running time $(2 - \epsilon)^{\ell}\cdot n^{O(1)}$ for any $\epsilon > 0$ for $(2,t)$-\textsc{Club}, unless the so-called Strong Exponential Time Hypothesis (SETH)~\cite{IPZ01} fails.\footnote{
The SETH is a well-accepted hypothesis in computational complexity theory and is used in several other lower-bound results~\cite{LMS11}.} 
The SETH postulates that \textsc{CNF-Sat}, the satisfiability problem for boolean formulas in conjunctive normal
form, cannot be solved in time $(2 - \epsilon)^{N}\cdot |F|^{O(1)}$ for any
$\epsilon > 0$, where $N$ is the number of variables in the given formula~$F$ and~$|F|$ is the formula size.
The proof is based on the reduction described in the proof of \Cref{NPc} and a similar known result for \textsc{2-Club}~\cite{hartung_experiments}.

\begin{theorem} \label{2tC-VD-SETH}
 If the Strong Exponential Time Hypothesis (SETH) holds, then \tRobTwoClub and \tHerTwoClub do not admit a~$(2 - \epsilon)^{\ell}\cdot n^{O(1)}$-time algorithm for any $\epsilon > 0$.
\end{theorem}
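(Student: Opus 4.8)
The plan is to derive the lower bound by composing the known SETH-based lower bound for \textsc{2-Club} with the polynomial-time reduction from the proof of \Cref{NPc}, and the central observation that makes this work is that the latter reduction preserves the parameter~$\ell = n-k$ \emph{exactly}. First I would invoke the result of~\cite{hartung_experiments}, which states that, unless SETH fails, \textsc{2-Club} admits no $(2-\epsilon)^{\ell}\cdot n^{O(1)}$-time algorithm for any $\epsilon > 0$. Concretely, this is witnessed by a reduction from \textsc{CNF-Sat} on $N$ variables to \textsc{2-Club} instances of size polynomial in $N$ and $|F|$ in which the parameter equals~$N$; a faster \textsc{2-Club} algorithm would then yield a $(2-\epsilon)^{N}\cdot |F|^{O(1)}$-time algorithm for \textsc{CNF-Sat}, contradicting SETH. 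It therefore suffices to transfer this bound through the reduction of \Cref{NPc} while controlling how the parameter~$\ell$ transforms.

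The key step is to verify that the reductions in the proof of \Cref{NPc} are parameter-preserving with respect to~$\ell$. For \tHerTwoClub, the reduction maps an instance~$(G,k)$ with $n$ vertices to an equivalent instance~$(G', k+t)$, where $G'$ is obtained by adding the $t$ vertices of~$V^*$; hence $n' = n+t$ and the new parameter is $\ell' = n' - (k+t) = (n+t) - (k+t) = n-k = \ell$. For \tRobTwoClub, the reduction adds the $t-1$ vertices of~$V^*$ and raises the target to~$k+t-1$, so $n' = n + t - 1$ and again $\ell' = (n+t-1) - (k+t-1) = n-k = \ell$. In both cases the vertex count grows only by the additive constant~$t$ (respectively $t-1$), so $n' = O(n)$ for every fixed~$t$, and the reduction runs in polynomial time.

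Putting these pieces together, a hypothetical $(2-\epsilon)^{\ell'}\cdot (n')^{O(1)}$-time algorithm for \tHerTwoClub (or \tRobTwoClub), precomposed with the reduction of \Cref{NPc}, would decide \textsc{2-Club} in time $(2-\epsilon)^{\ell}\cdot n^{O(1)}$, contradicting the \textsc{2-Club} lower bound and thus SETH. I expect the main obstacle to be purely the exact bookkeeping that establishes $\ell' = \ell$: this identity is what guarantees that no multiplicative blow-up in the exponent occurs, so that a hypothetical savings of $(2-\epsilon)^{\ell'}$ translates into the \emph{same} savings $(2-\epsilon)^{\ell}$ for \textsc{2-Club}. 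Everything else is routine, since the reduction is polynomial-time computable and enlarges the instance only additively, so its overhead is absorbed into the $n^{O(1)}$ factor.
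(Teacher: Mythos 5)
Your proposal is correct and follows essentially the same route as the paper's own proof: invoke the SETH-based lower bound for \textsc{2-Club} from~\cite{hartung_experiments}, then observe that the reductions of \Cref{NPc} increase the vertex count and the target by the same additive amount, so that $\ell' = \ell$ and the lower bound transfers directly. The additional bookkeeping you supply (the explicit computations $n'-k'=n-k$ and the note that the instance grows only additively) matches the paper's argument, just spelled out in slightly more detail.
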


\begin{proof}
  The reduction presented in the proof of~\cref{NPc} transforms a
  \textsc{2-Club} instance~$(G=(V,E),k)$ into a \tRobTwoClub or a \tHerTwoClub instance~$(G'=(V',E'),k')$ where~$|V|'=|V|+t-1$ and $k'=k+t-1$ or~$|V|'=|V|+t$ and $k'=k+t$. 
  Hence, the dual parameters~$\ell=|V|-k$ and~$\ell'=|V'|-k'$ are not changed by either reduction. 
  Consequently, any algorithm solving~\tRobTwoClub or \tHerTwoClub in~$(2 - \epsilon)^{\ell}\cdot n^{O(1)}$ time for some~$\epsilon>0$ implies an algorithm solving~\textsc{$2$-Club} in this time. 
  This is impossible if the SETH is true~\cite{hartung_experiments}.
\end{proof}
As a final note, as shown by \citet{chang} for \textsc{2-Club}, \cref{alg:dual-param} has a worst-case
running time of $\alpha^{n} n^{O(1)}$, where $\alpha < 1.62$ is the golden ratio.
\begin{corollary}\label{goldenratio}
	\tRobTwoClub, \tHerTwoClub, and \tConTwoClub can be solved in $O(1.62^{n})$ time.
\end{corollary}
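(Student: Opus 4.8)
The plan is to keep \cref{alg:dual-param} essentially intact but to replace its crude ``at most $2^{\ell}$ leaves'' bound by the sharper golden-ratio analysis that \citet{chang} carried out for \textsc{2-Club}, measuring progress in the number of \emph{undecided} vertices rather than in $\ell$. Concretely, I would augment the recursion with a set $P\subseteq V$ of \emph{protected} vertices that may never be deleted; initially $P=\emptyset$, and the relevant measure is $\mu := |V\setminus P|$, the number of vertices whose membership in the solution is not yet fixed. The branching step stays the same in spirit---pick an incompatible pair $u,v$ and recurse---but now one exploits that in one of the two branches an endpoint can be moved into $P$ instead of merely surviving.

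The key fact to establish first is that incompatibility is \emph{monotone} under vertex deletion for all three models: if $u$ and $v$ are incompatible in the current graph, then they are incompatible in every induced subgraph containing both of them. For \tHerTwoClub and \tRobTwoClub this is immediate from \cref{def:compatible}, since the common-neighbor count only drops when vertices are removed; for \tConTwoClub it follows because deletions can only increase distances and, by Menger's theorem, only decrease the number of internally vertex-disjoint paths. This monotonicity---already implicit in the correctness of \cref{alg:dual-param}---guarantees that no $t$-well-connected 2-club contains both $u$ and $v$, which is exactly what legitimizes the refined branch.

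Given an incompatible pair $u,v$ with, say, $u\notin P$, I would branch into two cases: (i) delete $u$, which removes one undecided vertex while leaving $v$ to be decided later, so $\mu$ drops by $1$; and (ii) keep $u$ by placing it in $P$ and delete $v$, which is \emph{forced} since $u$ and $v$ cannot coexist in a solution, and here both $u$ and $v$ leave the undecided set, so $\mu$ drops by $2$. If the chosen pair has both endpoints already in $P$ the node fails, and if exactly one endpoint lies outside $P$ the other is deleted without branching, which is even cheaper. The worst-case recurrence for the number of search-tree leaves is therefore $L(\mu)\le L(\mu-1)+L(\mu-2)$, whose solution is $O(\phi^{\mu})$ with $\phi=(1+\sqrt{5})/2<1.62$. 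Since $\mu\le n$ throughout and each node costs only the polynomial time for detecting an incompatible pair established in \cref{2tC-VD} (at most $O(t^{2}nm)$, which is $n^{O(1)}$ as $t\le n$), the total running time is $O(\phi^{n}\cdot t^{2}nm)=O(1.62^{n})$, the polynomial and the $t^{2}$ factor being absorbed because $\phi<1.62$.

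The main obstacle I anticipate is not the recurrence but verifying that the ``fix $u$ into the solution'' operation is sound and meshes with the correctness proof of \cref{alg:dual-param} for all three compatibility notions simultaneously. One must confirm that protecting vertices never discards an optimal solution---which holds because the two branches exhaustively split on whether $u$ belongs to the solution---and that the per-node incompatibility test from \cref{2tC-VD} still runs within the stated bound in the presence of protected vertices, which it does, since protection only restricts \emph{which} endpoint may be deleted and not \emph{how} an incompatible pair is found.
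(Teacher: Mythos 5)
Your proposal is correct and follows essentially the same route as the paper's proof: branch on an incompatible pair by either deleting one vertex or fixing it in the solution and (forcedly) deleting its incompatible partner, measure progress by the number of non-fixed vertices, and obtain the Fibonacci recurrence $T(j)=T(j-1)+T(j-2)$ yielding $O(\alpha^n)$ leaves with $\alpha<1.62$, into which the polynomial per-node cost is absorbed. The monotonicity of incompatibility under vertex deletion that you make explicit is left implicit in the paper, but the argument is the same.
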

\begin{proof}
  Consider the search tree algorithm of \Cref{2tC-VD}. Whenever we branch over a pair of
  incompatible vertices $v$ and $w$, we can delete~$v$ in one branch and fix~$v$ (as
  contained in the sought solution) in the other branch. The justification for fixing~$v$
  is that the first branch already explores all possible solutions that do not
  contain~$v$, therefore~$v$ can be assumed to be in the solution in the second
  branch. For any fixed vertex, we delete all vertices that are incompatible with~$v$. Thus,
  in the branch fixing~$v$, we can delete~$w$ and, possibly, some further vertices.
  Branching continues until either (a) the graph is $t$-well-connected 2-club or (b) all remaining vertices are fixed. 
  In case (a) we found a solution and in case (b) we fixed two incompatible vertices, leading to a contradiction. 
  Letting~$j$ denote the number of nonfixed vertices in~$G$, the recursion for the number of leaves in the search tree is~$T(j)=T(j-1)+T(j-2)$ and initially~$j=n$. 
  This is the recurrence for Fibonacci numbers implying a search tree size of~$O(\alpha^n)$ where~$\alpha<1.62$.
\end{proof}

\subsection{Turing Kernelization}\label{maxdegree}
In this section, we describe parameterized preprocessing rules for our three $t$-well-connected 2-club detection problems. 
Herein, the parameter is the maximum
degree~$\Delta$ of the input graph~$G$. 
The idea of a ``Turing kernelization'' is to solve
the original problem instance by solving polynomially many problem
instances whose size is upper-bounded in the parameter value (cf.\ \citet{Kra14}). These problem
instances are small if the parameter value is small. In our case, we
need to solve~$n$ problem instances, each consisting of~$O(\Delta^2)$ vertices.
An analogous result for 2-club was shown by~\citet{schaefer2012}.
\begin{theorem}\label{maxdegthm}
  \tRobTwoClub, \tHerTwoClub, \tConTwoClub can be solved by solving~$O(n)$ instances
  of~\tRobTwoClub, \tHerTwoClub, and \tConTwoClub, respectively, where each instance
  contains at most $\Delta^{2}$~vertices.
\end{theorem}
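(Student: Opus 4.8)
The plan is to establish a Turing kernelization whose kernels are the second neighborhoods of single vertices. Concretely, for each vertex $v\in V$ I would build the instance $G_v:=G[N_2[v]]$ (keeping $t$ and $k$), solve the respective $t$-well-connected 2-club problem on each $G_v$, and return the best solution found over all $n$ choices of $v$ (for the decision version, take the disjunction of the $n$ answers). Since any $t$-well-connected 2-club in $G_v$ is an induced subgraph of $G$ of the same kind, this clearly produces only feasible solutions for the original instance; the work is to argue that no solution is lost and that each $G_v$ is small.

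For correctness, the first step is to note that each of the three models is in particular a 2-club: a $t$-robust and a $t$-connected 2-club are 2-clubs by \cref{def:robust,def:t-conn}, and a $t$-hereditary 2-club is a 2-club by taking $U=\emptyset$ in \cref{def:t-her}. Hence, if $S$ is a $t$-well-connected 2-club and $v\in S$, then every $w\in S$ has distance at most two from $v$ in $G[S]$, and therefore also in $G$, so $S\subseteq N_2[v]$. The crucial observation is then that each compatibility predicate in \cref{def:compatible} -- counts of common neighbors for the robust and hereditary variants, and internally vertex-disjoint short paths for the connected variant -- refers only to the induced subgraph $G[S]$ and never to vertices outside $S$. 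Consequently, by \cref{obs:comp-is-club}, whether $S$ is a $t$-well-connected 2-club is an intrinsic property of $G[S]$ and is unaffected by restricting the host graph from $G$ to $G_v$ whenever $S\subseteq N_2[v]$. Thus any optimal solution $S^\ast$ of the original instance survives unchanged as a solution in $G_v$ for any chosen $v\in S^\ast$, and is recovered when $v$ is processed.

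For the size bound, a vertex $v$ has at most $\Delta$ neighbors, each contributing at most $\Delta-1$ further vertices at distance exactly two, so $|N_2[v]|\le 1+\Delta+\Delta(\Delta-1)=\Delta^2+1=O(\Delta^2)$; together with the $n$ choices of $v$ this gives $O(n)$ instances, each on $O(\Delta^2)$ vertices. If one insists on the exact stated bound $\Delta^2$, it suffices to treat $v$ as a fixed required vertex and count only the $\le\Delta^2$ vertices of the open neighborhood $N_2(v)$.

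The step I expect to require the most care is the intrinsicness claim for all three models at once, namely verifying that none of the three compatibility notions can be certified or destroyed by vertices lying outside the candidate set $S$. This is immediate for the common-neighbor counts of the robust and hereditary variants, but warrants an explicit check for $t$-connectivity, where the demanded internally vertex-disjoint paths of length at most two must be realized inside $G[S]$ rather than routed through external vertices. Everything else -- feasibility of the returned solutions, the trivial disjunction over the $n$ decision instances, and the additive-constant discrepancy in the vertex bound -- is routine.
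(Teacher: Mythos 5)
Your proposal is correct and follows essentially the same route as the paper: a Turing kernelization into the $n$ closed 2-neighborhoods, justified by the fact that every $t$-well-connected 2-club is in particular a 2-club and hence lies in $N_2[v]$ for each of its vertices $v$, with the $\Delta^2$ size bound obtained by the same counting. Your additional remarks—that each model's defining property is intrinsic to $G[S]$ (so nothing is lost by restricting the host graph), and the careful handling of the $+1$ in the neighborhood count—only make explicit what the paper's terser argument leaves implicit.
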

\begin{proof}
  By definition, every 2-club has diameter at most two and is therefore fully contained in
  the closed 2-neighborhood of each of its vertices in the original input graph. To find
  the largest $t$-well-connected 2-club in a graph~$G$ it is now
  sufficient to search in each of the $n$ closed 2-neighborhoods of~$G$ for a largest such
  2-club and take the largest among all of them. In a graph with maximum degree~$\Delta\ge 2$
  the size of the closed 2-neighborhood of any vertex~$v$ is upper-bounded by~$\Delta^{2}$
  as $v$ can have at most $\Delta$~neighbors which each can have at most~$\Delta - 1$
  other neighbors besides $v$.
\end{proof}
While \cref{maxdegthm} is simple to show, it has important practical
implications. It allows us to decompose larger (and relatively sparse)
graphs into a linear number of smaller graphs where the problem can be
solved efficiently. The smaller individual graphs are called
\textit{Turing kernels}.
This result has been used for 2-\textsc{Club}-implementations \cite{hartung_experiments} and we use it as well in our implementation as described in \Cref{sec:Algorithm}.
As a corollary, we obtain the following running time bound employing the parameter~$\Delta$.
To this end, note that one can compute all Turing kernels in $O(n \Delta^2)$~time.
\begin{corollary}
  \tRobTwoClub, \tHerTwoClub, and \tConTwoClub can be solved in
  $O(1.62^{\Delta^2}\cdot n)$ time on graphs with maximum degree~$\Delta$.
\end{corollary}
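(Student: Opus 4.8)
The plan is to chain together the two preceding results: the Turing kernelization of \cref{maxdegthm}, which reduces any instance to $O(n)$ subinstances each on at most $\Delta^2$ vertices, and the golden-ratio search-tree bound of \cref{goldenratio}, which solves an instance on $N$ vertices in $O(\alpha^N N^{O(1)})$ time for the golden ratio~$\alpha<1.62$.

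First I would invoke \cref{maxdegthm} to replace the task of finding a largest $t$-well-connected 2-club in~$G$ by the task of finding such a 2-club within each of the $n$ closed 2-neighborhoods~$G[N_2[v]]$, and then returning the largest one found over all~$v$. This is sound because the $t$-well-connected 2-club property depends only on the induced subgraph~$G[S]$ (compatibility is evaluated inside~$G[S]$ per \cref{def:compatible}), so restricting the search to~$G[N_2[v]]$---which contains every 2-club through~$v$, as every 2-club has diameter at most two---loses no solution. As noted just before the statement, all~$n$ kernels can be constructed in $O(n\Delta^2)$ total time. I would then run the algorithm of \cref{goldenratio} on each kernel; since each kernel has at most~$\Delta^2$ vertices, a single run costs $O(\alpha^{\Delta^2}(\Delta^2)^{O(1)})$ time. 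Summing over the~$n$ kernels and adding the $O(n\Delta^2)$ construction time yields the stated bound.

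The only point requiring care---and the nearest thing to an obstacle in what is otherwise a direct combination---is the absorption of the polynomial factor~$(\Delta^2)^{O(1)}$ into the base of the exponential. Here I would exploit that~$\alpha\approx 1.618$ is \emph{strictly} smaller than~$1.62$: for sufficiently large~$N$ one has $\alpha^N N^{O(1)}\le 1.62^N$, so the per-kernel cost is $O(1.62^{\Delta^2})$, the $O(n\Delta^2)$ preprocessing term is dominated, and the overall running time over all~$n$ kernels is $O(1.62^{\Delta^2}\cdot n)$, as claimed.
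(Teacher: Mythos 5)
Your proposal is correct and matches the paper's (implicit) argument exactly: the corollary is obtained by combining the Turing kernelization of \cref{maxdegthm} (computable in $O(n\Delta^2)$ time) with the $O(1.62^{N})$ search-tree bound of \cref{goldenratio} applied to each of the $O(n)$ kernels of at most $\Delta^2$ vertices. Your extra remark on absorbing the polynomial factor into the base $1.62>\alpha$ is a detail the paper already folds into \cref{goldenratio}, so nothing is missing.
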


\section{Implemented Algorithm} 
\label{sec:Algorithm}

In this section we describe our implementation for solving all three extensions of the \textsc{2-Club} model.
Recall that we use the term $t$-well-connected 2-clubs to make statements that hold for all three models.
The pseudocode given in \Cref{alg:well-connected-two-club} shows the general setup of the algorithm which includes the Turing kernelization (\cref{maxdegthm}) and the search tree (\cref{2tC-VD}) combined with data reduction rules inspired by \citet{hartung_experiments}.
\begin{algorithm}[t!]\small
  \caption{Our algorithm for finding $t$-well-connected 2-clubs.}
 \label{alg:well-connected-two-club}
	\KwIn{A graph~$G = (V,E)$ and a positive integer~$t$.}
	\KwOut{A maximum-size $t$-well-connected 2-club.}
	$\ell \gets 0$; $S \gets \emptyset$ \tcp*{$\ell$: lower bound, $S$: best solution so far}
	\lIf(\tcp*[f]{initialize lower bound}){$t = 1$}{$\ell \gets \Delta + 1$}

	$\mathcal{T} \gets $ Turing kernels sorted by size in nondecreasing order\; \label{line:turingkernel} 
	\ForEach(\label{line:Turing-kernel-loop}){\upshape Turing kernel $T \in \mathcal{T}$}
	{
		Apply data reduction rules on~$T$ \tcp*{see \Cref{sec:reduction-rules}}
		\If{\upshape size of $T$ is larger than $\ell$}{
			solution${}\gets{}$branch on~$T$ \tcp*{using \cref{alg:dual-param}; see \Cref{sec:search-tree}}
			\If{\upshape solution size${}>\ell$}{
				$\ell \gets{}$solution size; $S \gets{}$solution \label{line:found-new-solution}\;
			}
		}
	}
	\KwRet{$S$}
\end{algorithm}
In addition, our algorithm uses data structures for maintaining compatibility information (see \cref{def:compatible}) for all vertices. %

While the criterion for compatibility depends on the selected model, our algorithm mostly works with the compatibility information.
Thus, the major difference between the three models within our algorithm is the test whether two vertices are compatible.
For the $t$-robust 2-club and the $t$-hereditary 2-club model, the compatibility of two vertices depends on whether they are adjacent and how many common neighbors they have. 
It is thus relatively easy to maintain the compatibility information for these two models as deleting a vertex can only affect the compatibilities of its neighbors.
For the $t$-connected 2-club model one needs to compute the number of (internally) vertex-disjoint paths to determine whether two vertices at distance at most two are compatible.
To this end, we use a standard linear-time reduction to a maximum flow problem with unit capacities. 
To compute the maximum flow, we use the classic algorithm of Ford and Fulkerson~\cite{FF56}.

\subsection{The Search Tree Method and Turing Kernelization}\label{sec:search-tree}

The first step of the algorithm is to utilize Turing kernelization based on our observations from \Cref{maxdegree}.  
For each vertex~$v$ of the original graph, we construct a Turing kernel consisting of the closed 2-neighborhood~$N_{2}[v]$ of that vertex.  
We then run the search tree method described in \Cref{dualParam} on each of the generated Turing kernels.  
For each Turing kernel we initially mark the vertex~$v$ from which we derived the kernel (marking means that we assume that~$v$ is part of the $t$-well-connected 2-club). 
The marking is correct since every vertex is marked exactly once.
Additionally, after solving the Turing kernel for~$v$ we remove~$v$ from the graph and from all subsequent Turing kernels because if~$v$ is part of an optimal solution, then we have found this solution in the kernel for~$v$.
Also, we discard any Turing kernel not larger (in terms of number of vertices) than the current lower bound. 
The largest $t$-well-connected 2-club of the input graph then is simply the largest among the $t$-well-connected 2-clubs of each individual kernel.

As an initial lower bound, we use maximum degree plus one for $t = 1$ ($t=0$ in the case of $t$-hereditary 2-clubs). 
For $t \geq 2$, we set the initial lower bound to zero. 
This is because we know of no lower bounds which are easy to compute and sufficiently strong.
Due to the absence of an initial lower bound, we check the individual kernels of the graph in order of nondecreasing size.
The idea is to keep for all considered kernels the gap between the current solution size lower bound and the size of the currently considered kernel as small as possible, as this gap constitutes an upper bound on the depth of the search tree for any given kernel. 
To reduce overhead from re-evaluating the size of each kernel after every vertex deletion, we sort the kernels of the graph once at the beginning of the algorithm by the size of the 2-neighborhood of the respective vertices and stick to this order throughout the run of the algorithm, even though some kernels might become smaller than some of their predecessors due to vertex deletions.
In our experiments we observed that this worked quite well in the sense that on instances where the algorithm needs more than a minute, less than 10\% of the running time of our algorithm is spent on the branching.

\subsection{Data Reduction Rules}\label{sec:reduction-rules}

To further improve the search tree method, we exhaustively apply various data reduction rules to shrink the original graph as well as every Turing kernel. %
The data reduction rules we use are mostly generalizations of known data reduction rules for 2-\textsc{Club}~\cite{hartung_experiments}.
They are as follows:
\begin{rrule}[Marked Incompatible Rule]
	If at any time two vertices are both marked and incompatible, then abort the branch.
\end{rrule}

\begin{rrule}[Incompatible Resolution]
	Remove all vertices which are incompatible with any marked vertices.
\end{rrule}
The correctness of these two rules is obvious as a $t$-well-connected 2-club by definition cannot contain incompatible vertices.

\begin{rrule}[Low Degree Rule] 
	Remove vertices of degree less than~$t$ (less than $t+1$ for the $t$-hereditary $2$-club model).
	For $t = 1$ ($t=0$ for the $t$-hereditary $2$-club model) delete all vertices of degree one. 
	If a marked vertex was deleted, then abort the branch.
\end{rrule}
This data reduction rule is not universally correct, so we need to make two exceptions.
 
First, deleting degree-one vertices for $t = 1$ ($t=0$ for the $t$-hereditary $2$-club model) is only correct when initializing the algorithm with the closed 1-neighborhood of the maximum degree vertex as initial solution. 
A 2-club containing a degree-one vertex can only consist of the closed 1-neighborhood of its neighbor. 
Hence, no solution better than the one obtained from the closed 1-neighborhood of the maximum degree vertex contains degree-one vertices.

Second, \cref{obs:smallDegree} shows that for the $t$-hereditary 2-club model, deleting vertices of degree less than~$t+1$ is only correct if the solution is not a clique.
However, cliques of size at most $t$ are by definition $t$-hereditary 2-clubs. 
Thus, we first assume that the $t$-hereditary 2-club has size at least~$t+1$ which allows us to apply the Low Degree Rule. 
If we do not find a $t$-hereditary 2-club under this assumption, then we use a standard \textsc{Clique} algorithm to find the largest clique and return it. 
The running time of the \textsc{Clique} algorithm is, if it is invoked at all, much lower than the running time for finding the $t$-hereditary 2-club of size at least~$t+1$. 
We use the Low Degree Rule extensively on the whole graph as well as on every kernel individually, both initially and whenever the degree of a vertex falls below~$t$ due to vertex deletions. 

\begin{rrule}[Low Compatibility Rule]
	Remove vertices whose number of compatible vertices is lower than the current lower bound.
	If a marked vertex was deleted, then abort the branch.
\end{rrule}

The correctness of this data reduction rule follows from the fact that a $t$-well-connected 2-club containing some vertex $v$ can only contain vertices which are compatible with~$v$.
Thus, if~$v$ has less compatibilities than the current lower bound, then no $t$-well-connected 2-club containing $v$ contains more vertices than the current lower bound.

Our next data reduction rule uses the notion of vertex cover.
A \emph{vertex cover} in a graph~$G = (V,E)$ is a vertex subset~$V' \subseteq V$ such that each edge in~$E$ is incident with at least one vertex in~$V'$.

\begin{rrule}[Vertex Cover Rule]
  Let $G = (V, E)$ be the current instance and let $G_{C} = (V, E')$ be the corresponding incompatibility graph, where $\{v, w\} \in E'$ if and only if $v$ and $w$ are incompatible. 
  Compute a lower bound~$b$ on the vertex cover size of~$G_C$. 
  If~$|V|-b$ does not exceed our current lower bound for the solution size, then abort the branch.
\end{rrule}
The correctness of the Vertex Cover Rule follows from the observation that the size of a minimum vertex cover of $G_{C}$ is a lower bound on the number of vertices in $G$ that still must be deleted in order to obtain a $t$-well-connected 2-club.
As a quickly-to-compute lower bound for the vertex cover size we use the size of a maximal matching.

The Vertex Cover Rule (VCR) reduces the search space substantially, but it is costly in terms of running time. 
We therefore use the following observations to cut down the number of necessary applications of the VCR: %
Whenever we have created no new incompatibilities since the last application of the VCR, the last result of the VCR is still a valid lower bound for the required number of vertex deletions. 
Thus, we reuse the last lower bound~$b$ on the vertex cover when applying the VCR rule.
To further decrease the number of VCR applications, we remember the last vertex cover of the incompatibility graph and only count as new those incompatibilities which are not covered by this last vertex cover.
If we have created $x$~new conflicts since the last application of the VCR, then a new application of the VCR will report at most $x$~further required deletions over the last result of the VCR. 
So if the last upper bound minus~$x$ is not enough to trigger the VCR, then there is no need to check the VCR again. 

\begin{rrule}[No Choice Rule] 
	If at any time two nonadjacent marked vertices have exactly~$x$ common neighbors, then mark all of their neighbors which are not marked yet. 
	Here, $x$ is:
	\begin{itemize}
		\item $t$ for the $t$-robust 2-club model,
		\item $t+1$ for the $t$-hereditary 2-club model, and
		\item $1$ for the $t$-connected 2-club model.
	\end{itemize}
\end{rrule}
The reason for having a much weaker bound for $t$-connected 2-clubs is that the $t$~paths connecting two compatible vertices can have length more than two:
For example, consider a cycle~$C_5$ on five vertices with exactly two nonadjacent vertices~$u$ and~$v$ being marked. 
In the cycle, $u$ and~$v$ have exactly one common neighbor. 
However, the~$C_5$ is a $2$-connected $2$-club.
The graph shown in \cref{fig:t-connected-but-not-2-robust} in \cref{sec:properties} ``generalizes'' the~$C_5$ example: 
The graph is a $t$-connected $2$-club is displayed where vertex pairs (e.\,g.\ a marked vertex in the set~$V_1$ and a marked vertex in the set~$V_4$) have only one common neighbor.
Thus, if two nonadjacent marked vertices have at least two common neighbors, then one might still be able to delete one of the two vertices without destroying $t$-connectedness.

We exhaustively apply all reduction rules except the Vertex Cover Rule on each constructed Turing kernel.
We use the Vertex Cover Rule only during the branching to prune the search tree.

\subsection{Data Structures} \label{ssec:data-structures}

To perform the data reduction rules quickly at all times, the following information about the current Turing kernel is held up-to-date.
To this end, recall that the current Turing kernel is a vertex~$v \in V$ together with its 2-neighborhood and that the Turing kernels are processed one after the other, see \Crefrange{line:Turing-kernel-loop}{line:found-new-solution} in \Cref{alg:well-connected-two-club}.

\paragraph{Common Neighbor Matrix} 
A matrix storing for each pair of vertices in the current Turing kernel their number of common neighbors. 
This data structure is very important for the $t$-hereditary 2-club and $t$-robust 2-club model. 
However, also in the $t$-connected 2-club model it is useful to check in constant time whether two nonadjacent vertices have a common neighbor, that is, whether they are at distance at most two.

\paragraph{Incompatibility Graph} 
A graph over the same vertex set as in the Turing kernel where two vertices are adjacent if and only if the two vertices are incompatible.
Hence, for each edge $\{v, w\}$ in this graph at least one of the two vertices~$v$ or~$w$ must be deleted in order to obtain a $t$-well-connected 2-club. 

\paragraph{Compatibility Vector} 
A vector containing for each vertex $v$ of the current Turing kernel the number of vertices that are compatible with~$v$. 
Among other things this vector allows a fast (linear in the number of vertices of the current Turing kernel) look-up of the vertex that is compatible with the fewest other vertices; this vertex is used as the next branching candidate.
When looking for 2-clubs, the entries in the compatibility vector correspond to the size of the 2-neighborhood of the respective vertex.

\medskip

The initialization of these data structures requires $O(m\Delta + n^{2})$ time for the $t$-hereditary 2-club model and $t$-robust 2-club model, where $n$ is the number of vertices in the kernel, $m$ is the number of edges in the kernel, and $\Delta$ is the maximum vertex degree in the kernel. 
For the $t$-connected 2-club model we implemented the classical maximum flow algorithm of Ford and Fulkerson~\cite{FF56} to determine whether two vertices are compatible. 
Thus, for $t$-connected 2-clubs there is an additional~$O(tm)$ running time factor giving an overall initialization time of~$O(tn^{2}m)$.

We first initialize the common neighbor matrix with zero values and then for each vertex $v$ of the kernel increase the entry of each pair of neighbors of $v$ by one.
Afterwards, for the $t$-hereditary 2-club model and the $t$-robust 2-club model, a single pass over the common neighbor matrix suffices to count the number of compatible vertices for each vertex and identifying every initial incompatibility. 
For the $t$-connected 2-club model, we run the maximum flow algorithm for every pair of vertices that are at distance at most two from each other.
The initialization cost can be quite large for dense kernels with many vertices (more than 50\% of the overall running time on large social networks), so we apply all data reduction rules we can use without this information first.
If after this quick preprocessing the kernel has less vertices than the current lower bound, then we discard the kernel before the initialization of the data structures.
To keep the information continuously up-to-date, for every vertex $v$ that is deleted, we do the following.

\begin{enumerate}
	\item Decrease the number of compatible vertices of all vertices not compatible with $v$ by one; this can be done in $O(n)$ time, where~$n$ is the number of vertices in the current kernel.
	\item Decrease the common neighbor counter for all pairs of neighbors of~$v$ by one; this can be done in $O(\deg(v)^{2})$ time, where~$\deg(v)$ is the degree of~$v$.
	\item Update the incompatibility graph:
	\begin{itemize}
		\item For $t$-robust 2-clubs and $t$-hereditary 2-clubs, we can quickly update the incompatibility graph while updating the common neighbor counter (see \cref{def:compatible}).
		\item For $t$-connected 2-clubs, we run the maximum flow algorithm between every pair of vertices that were compatible before deleting~$v$.
		If the two vertices have distance more than two, then we set them incompatible without running the maximum flow algorithm.
	\end{itemize}
\end{enumerate}

For $t$-robust 2-clubs and $t$-hereditary 2-clubs, this leads to an~$O(n + \deg(v)^{2})$-time overhead for every deletion of a vertex~$v$ to keep the information up-to-date.
For $t$-connected 2-clubs, the overhead is in the worst case~$O(tn^2m)$.
Overall, the overhead is quite significant but enables us to perform our data reduction rules quickly and extensively whenever they apply and as soon as they apply.

\subsection{Integer Linear Programming Formulation for the \tHerTwoClub Problem}\label{ssec:ilp}

We want to utilize a state-of-the-art ILP solver (Gurobi) as benchmark for our algorithm on instances with $t \geq 2$.
To this end, we now describe the ILP formulation for \tHerTwoClub which we will use in our performance tests in \Cref{sec:experiments}.  
Given a graph~$G = (V,E)$ and an integer~$t$, the \tHerTwoClub problem can be solved with the following ILP:

Introduce a 0/1-variable~$x_{v}$ for each vertex~$v$ of the graph~$G$. 
Setting~$x_{v} = 1$ indicates~$v\in S$, where~$S$ is the solution set of vertices, that is, a set of vertices constituting a maximum $t$-hereditary $2$-club.
The ILP is 
\begin{align*}
	\text{max } {} & {} \sum_{v \in V} x_v \\
	\text{s.\,t. } {} & {} (t+1) \cdot x_{u} + (t+1) \cdot x_{w} - \smashoperator{\sum_{v\in N(u) \cap N(w)}} x_{v} \leq (t+1) && \forall \{u,w\} \notin E, \\
	& x_v \in \{0,1\} && \forall v \in V.
\end{align*}
For each nonadjacent pair of vertices~$u$ and~$w$ we have a constraint 
enforcing to either take at most one of the two vertices~$u$ and~$w$ into the solution or add at least $t+1$ of their common neighbors, too.

This formulation generalizes the ILP formulation for the 2-\textsc{Club} problem as first described by~\citet{bourjolly_exact} and used several times afterwards~\cite{BBT05,carvalho,BS17}. 
Furthermore, we remark that the ILP formulation of \citet{VB12} for finding $t$-robust $2$-clubs is very similar.

\section{Computational Experiments}
\label{sec:experiments}

In this section, we provide an experimental evaluation of our implementation\footnote{Available from~\url{http://fpt.akt.tu-berlin.de/software/well-connected-2-club/wc2club.jar}.} (see \cref{sec:Algorithm}) on random graphs as well as on real-world graphs from the 10th~DIMACS challenge~\cite{Dim12}. 
Our implementation solves the optimization versions of the problems and finds $t$-well-connected 2-clubs of maximum size.
We refer to our implementation in the following as \texttt{WCC} (well-connected 2-club). 
For the special case of solving the basic \textsc{2-Club} problem (see \cref{ssec:2-club}), we compare \texttt{WCC} against the programs of \citet{hartung_experiments}\footnote{Available from~\url{http://fpt.akt.tu-berlin.de/two_club/}.} (referred to as \texttt{HKN}) and \mbox{\citet{chang}}\footnote{Available from~\url{https://sites.google.com/site/kdynamicneighborhoodproject/}.} (referred to as \texttt{CHLS}).
Furthermore, in \cref{ssec:biconnected-club} we also compare against two programs of \citet{YPB17}\footnote{\citet{YPB17} sent us the code of their two programs.} (one branch-and-bound algorithm, one ILP) for finding $2$-connected~$2$-clubs.
In \cref{ssec:2-t-club} we analyze the performance of \texttt{WCC} for finding $t$-well-connected 2-clubs in social networks for~$t \ge 1$.
In all our experiments, we set the time limit to one hour. 
All time measurements include the time to read the input graph.

We ran all our experiments on an Intel(R) Xeon(R) CPU E5-1620 3.60\,GHz machine with 64\,GB main memory under the Debian GNU/Linux 7.0 operating system. 
Our implementation is in Java and runs under the OpenJDK runtime environment in version~1.7.0\_65.  
The C++ implementation of \citet{YPB17} was compiled with g++ version 5.4.0.

\subsection{Comparison with other \textsc{2-Club} Algorithms} \label{ssec:2-club}

We may use our implementation (\texttt{WCC}) in three ways to find 2-clubs: by looking for 1-robust 2-clubs, 0-hereditary 2-clubs, or 1-connected 2-clubs.
We report the running times for the setting where we look for $0$-hereditary 2-clubs since this is faster than the other two settings. 
The setting where we look for $1$-connected 2-clubs is particularly slow since our implementation uses a maximum flow computation to verify that the graph is connected.

\paragraph*{Random Graphs}
As in previous experimental evaluations~\cite{hartung_experiments,chang}, we use the random graph generator due to \citet{gendreau} where the density of the resulting graphs is controlled by two parameters, $0\le a\le b\le 1$, and the expected density is $(a+b)/2$.

The results on random graphs give a clear picture: 
As our algorithm is not designed for random graphs but large sparse real-world networks, the implementations of \citet{hartung_experiments} (\texttt{HKN}) and \citet{chang} (\texttt{CHLS}) are both significantly faster than ours.
However, \texttt{WCC} still outperforms our \texttt{ILP}.
\Cref{tab:random-2-club-summary} summarizes the results for random graphs with expected density~0.15, which produces the hardest instances as already observed earlier~\cite{bourjolly_exact,hartung_experiments,chang}.

\begin{table}
	\caption{\normalfont
		Results on random graphs of density 0.15. 
		All time measurements are the average over 100 instances with the described parameters. 
		All running times are given in seconds.
		If a running time is bold, then some of these 100 instances (the number is given in brackets) did not finish within the time limit of one hour and the respective instance is accounted with one hour in the average. 
		Bold values are therefore just lower bounds on the real running times.%
	}
	\footnotesize
	\label{tab:random-2-club-summary}
	\renewcommand{\tabcolsep}{5pt}
	\centering
	\begin{tabular}{rrrrrrr}
		\toprule
		$a$ & $b$ 	& $n$ & \texttt{WCC} & \texttt{HKN} & \texttt{CHLS} & \texttt{ILP} \\ \midrule
		0.0 & 0.3 	& 140 & 3.50 & 1.27 & 1.00 & 3.61 \\
			&  		& 150 & 5.71 & 1.71 & 1.82 & 4.48 \\
			&  		& 160 & 6.84 & 1.99 & 2.64 & 5.03 \\
			\midrule
		0.05& 0.25 	& 140 & 19.34 & 6.11 & 7.37 & 38.67 \\
			&  		& 150 & 39.54 & 11.98 & 18.00 & 63.23 \\
			&  		& 160 & 72.35 & 18.33 & 30.54 & 95.04 \\
			\midrule
		0.1 & 0.2 	& 140 & 70.09 & 24.34 & 29.99 & 246.20 \\
			&  		& 150 & 183.04 & 59.41 & 85.94 & \textbf{605.76 [2]} \\
			&  		& 160 & 557.92 & 178.16 & 306.51 & \textbf{1,115.38 [9]} \\
			\midrule
		0.15& 0.15 	& 140 & 105.68 & 39.85 & 48.62 & 612.50 \\
			&  		& 150 & 322.70 & 118.77 & 168.91 & \textbf{1,508.00 [8]} \\
			&  		& 160 & \textbf{1,132.05 [1]} & 384.38 & \textbf{611.65 [1]} & \textbf{2,594.98 [46]} \\
		\bottomrule
	\end{tabular}
\end{table}

The algorithm \texttt{HKN} could solve all random instances within the time limit of one hour.
Our algorithm \texttt{WCC} as well as the algorithm \texttt{CHLS} could solve all but one instance within this time limit.
Both \texttt{WCC} and \texttt{CHLS} fail to solve the same instance, which \texttt{HKN} could solve in 2,571 seconds.
The \texttt{ILP}, however, failed on 65 of the overall 1,200 instances.

On average, \texttt{WCC} is 3.1 times slower than \texttt{HKN} and 2.4 times slower than \texttt{CHLS}.
If one considers the accumulated running times over all instances, then \texttt{WCC} is 3 times slower than \texttt{HKN} and 1.9 times slower than \texttt{CHLS}.
Compared to the \texttt{ILP} on instances where it found a solution within the time limit, \texttt{WCC} is on average 2.4 times faster and considering the accumulated running times (with unsolved instances counted with one hour), \texttt{WCC} is at least 2.7 times faster.

The reason for this relatively weak performance of \texttt{WCC} is the extensive usage of data reduction rules. 
The algorithm tries to apply all data reduction rules, but on random data they seldom apply.
To improve the performance on random graphs, improving the efficiency of the data reduction rules as well as investigating heuristic selection strategies concerning when to apply which rule could be a promising approach for improvements. 

\paragraph*{Real-World Graphs}
We considered real-world graphs from the 10th DIMACS challenge~\cite{Dim12}. 
We ran our algorithm on instances from the categories \emph{Clustering} and \emph{Walshaw's Graph Partitioning Archive}~\cite{Dim12}; this is a standard benchmark for graph clustering and community detection algorithms also used by \citet{YPB17} to evaluate their implementation for finding biconnected 2-clubs.
To test our algorithm on large scale social network graphs we ran it also on graphs from the \emph{co-author and citation} category~\cite{Dim12} and a graph (\textsf{graph-thres-01}) mined from DBLP\footnote{\url{http://dblp.uni-trier.de/faq/Am+I+allowed+to+crawl+the+dblp+website.html}} in 2012.
\pgfplotstableset{
	create on use/avgDeg/.style={
		create col/expr={2*\thisrow{m}/\thisrow{n}}}
}%
\pgfplotstableset{
	create on use/NormDens/.style={
		create col/expr={2*(\thisrow{m}/\thisrow{n})/(\thisrow{n}-1)}},
}%
\begin{table}[t!]
	\caption{
		Our data set sorted by the number of edges. 
		Here, $n^*$ is the number of vertices (excluding isolated vertices that are discarded when reading the graph), $m$ is the number of edges, density is defined as $m/\binom{n^*}{2}$, $\Delta$ is the maximum degree, and size is the number of vertices in a largest 2-club.
	}
	\label{tab:graphs}
	\footnotesize\centering
\pgfplotstabletypeset[columns={file,n,m,NormDens,avgDeg,maxDeg,2-club},
	columns/file/.style={string type,column name=Graph,column type = {r}},
	columns/n/.style={column name=$n^*$,precision=1,column type = {r}},
	columns/m/.style={column name=$m$,precision=1,column type = {r}},
	columns/avgDeg/.style={column name=avg.\ deg.,precision=2,dec sep align},
	columns/NormDens/.style={column name=density,precision=2,dec sep align},
	columns/maxDeg/.style={column name=$\Delta$,precision=1,column type = {r}},
	columns/2-club/.style={column name=size,precision=1,column type = {r}},
	every head row/.style ={before row=\toprule, after row=\midrule},
    every last row/.style ={after row=\bottomrule}]{\dataGraphs}
\end{table}%
\Cref{tab:graphs} shows our test set of graphs and \Cref{tab:real-world-club-summary} overviews the results for the  real-world graphs.
\begin{table}[t!]
	\centering
	\caption{\normalfont
		Results for \textsc{2-Club} on real-world graphs with a time limit of one hour per instance. 
		Here, $n^*$~is the number of vertices with degree at least one,~$m$~is the number of edges, and size is the number of vertices in a largest 2-club.
		Empty cells represent timeouts (for \texttt{WCC} or \texttt{HKN}) or insufficient memory (for \texttt{CHLS}).
		All times are in seconds. 
	}
	\footnotesize
	\label{tab:real-world-club-summary}
\pgfplotstabletypeset[columns={file,n,m,sizeMarten,timeMarten,timeSepp,timeChan},
	columns/file/.style={string type,column name=Graph,column type = {r}},
	columns/n/.style={column name=$n^*$,precision=1,column type = {r}},
	columns/m/.style={column name=$m$,precision=1,column type = {r}},
	columns/sizeMarten/.style={column name=size,precision=1,column type = {r}},
	columns/timeMarten/.style={column name=\texttt{WCC},precision=2,dec sep align},
	columns/timeSepp/.style={column name=\texttt{HKN},precision=2,fixed,dec sep align},
	columns/timeChan/.style={column name=\texttt{CHLS},precision=2,fixed,dec sep align},
	every head row/.style ={before row=\toprule, after row=\midrule},
    every last row/.style ={after row=\bottomrule}]{\dataTwoClubs}
\end{table}

Similarly to the experiments on random graphs, \texttt{CHLS} was the fastest solver on small instances.
This is probably due to the fact that \texttt{CHLS} is written in C (\texttt{WCC} and \texttt{HKN} are written in Java) and it uses an adjacency matrix as graph representation (\texttt{WCC} and \texttt{HKN} use adjacency lists).
Due to the usage of an adjacency matrix to store the graph, \texttt{CHLS} could not store and thus not solve any large graph in the \emph{Co-author} category.

\texttt{WCC} could solve all but one of the instances of the smaller graphs (less than 50,000 edges) within one second; the exception, namely \textsf{polblogs}, needed less than four seconds.
On medium-size and large instances, \texttt{WCC} clearly outperformed the other algorithms.
The only instance that \texttt{WCC} could not solve within one hour is the largest graph \textsf{coPapersCiteseer}.
A second run without time limit shows a running time of 1.8~hours for this instance; the second-best solver \texttt{HKN} needs 8.6 \emph{days}.
This good performance is mainly due to the extensive usage of data reduction rules---a feature that slows down our algorithm on random graphs.

\subsection{Comparison with other \textsc{$2$-Connected 2-Club} Algorithms} \label{ssec:biconnected-club}
We considered the same real-world instances as in \cref{ssec:2-club} to compare against the two implementations of \citet{YPB17}.
Their first implementation, which we refer to as \texttt{BB}, is a combinatorial branch-and-bound algorithm with some lower bound heuristics. 
Their second implementation, which we refer to as \texttt{BC}, is a branch-and-cut algorithm that is based on an ILP formulation. 
The results of the three algorithms are shown in \cref{tab:biconnected}.
\begin{table}[t!]
	\centering
	\caption{\normalfont
		Results for \textsc{2-Connected 2-Club} on real-world graphs with a time limit of one hour per instance. 		
		Here, $n^*$~is the number of vertices with degree at least one,~$m$~is the number of edges, and size is the number of vertices in a largest 2-club.
		Empty cells represent timeouts.
		All times are in seconds. 
	}
	\footnotesize
	\label{tab:biconnected}
\pgfplotstabletypeset[columns={Datafile,n,m,size (our),time (our),time (BB),time (ILP)},
	columns/Datafile/.style={string type,column name=Graph,column type = {r}},
	columns/n/.style={column name=$n^*$,precision=1,column type = {r}},
	columns/m/.style={column name=$m$,precision=1,column type = {r}},
	columns/size (our)/.style={column name=size,precision=1,column type = {r}},
	columns/time (our)/.style={column name=\texttt{WCC},precision=2,fixed,dec sep align},
	columns/time (BB)/.style={column name=\texttt{BB},precision=2,fixed,dec sep align},
	columns/time (ILP)/.style={column name=\texttt{BC},precision=2,fixed,dec sep align},
	every head row/.style ={before row=\toprule, after row=\midrule},
    every last row/.style ={after row=\bottomrule}]{\dataBiconTwoClubs}
\end{table}

\texttt{WCC} solved 21 out of the 26 instances within the time limit of one hour 
and all but one of the smaller graphs with less than 50,000 edges within less than one minute.
Only one instance, namely \textsf{polblogs}, could not be solved within one hour.
Moreover, \texttt{WCC} solved two larger graphs with more than 800,000 edges within less than 10 minutes.
As already discussed in \Cref{sec:Algorithm}, \texttt{WCC}  is not as efficient for the $t$-connected 2-club model as for the other two models.
In fact, a closer inspection of our implementation on the graph \textsf{polblogs} showed that more than 95\% of the running time went into the maximum flow-based connectivity check. 
Herein, the reduction to the maximum flow instance, without the computation of the maximum flow itself, took more than half of the time.
Although this reduction can be performed in linear time, it is applied very often after branching or after applying the data reduction rules. 

\texttt{BB} solved 11 and \texttt{BC} solved 13 out of the 26 instances.
Both of these solvers were faster on the smallest instances and on a few medium size instances.
In fact, the smaller graph \textsf{polblogs} that \texttt{WCC} could not solve is solved by \texttt{BC} in around eight minutes.
However, \texttt{WCC} solved three smaller graphs (\textsf{power}, \textsf{uk}, \textsf{add32}) with less than 10,000 edges in less than a second each and both \texttt{BB} and \texttt{BC} could not solve them in less than half an hour.
Neither \texttt{BB} nor \texttt{BC} could solve any of the larger graphs with more than 800,000 edges.
In fact, the ILP-based solver \texttt{BC} did not even finish writing all constraints within one hour for these large graphs.

Summarizing, \texttt{WCC} solved substantially more instances than \texttt{BB} and \texttt{BC}.
Again, the good performance is mainly due to the extensive usage of data reduction rules.
Furthermore, as one can see in the results for \textsf{polblogs}, if the data reductions are not effective enough, then \texttt{WCC} needs much more time to solve the instance.

We did not perform tests on random graphs since we expect that the outcome is similar as in \cref{ssec:2-club}: 
\texttt{WCC} will be outperformed by \texttt{BB}~and~\texttt{BC}.

\subsection{Evaluation for~$t \ge 1$} \label{ssec:2-t-club}

Finally, we performed experiments for computing $t$-well-connected 2-clubs with~$t \ge 1$. 
For these experiments, we compare the running times of \texttt{WCC} and \texttt{ILP}. 
Moreover, we compare our three well-connected 2-club models with each other. 
Herein, we compare $t$-robust 2-clubs and $t$-connected 2-clubs with $(t-1)$-hereditary 2-clubs.
The reason is that the base case---2-clubs---is reached for different~$t$-values: 
A 2-club is a $1$-robust 2-club, a $1$-connected 2-club, and a $0$-hereditary 2-club.
Furthermore, one can see many similar solutions when comparing $t$-robust 2-clubs against $(t-1)$-hereditary 2-clubs.
We ran our implementation with values for~$t$ being 1, 2, 3, 4, 5, 7, 9, 10, 15, 20, 50, 100, and 1,000.
The performance of our implementation varies for different values of~$t$ but there seems to be no clear correlation between~$t$ and the performance.

In our tests \texttt{WCC} clearly outperformed \texttt{ILP}.
Furthermore, \texttt{ILP} could solve only small real-world graphs.
Due to this one-sided result, we only discuss the results of \texttt{WCC}.
We refer to \mbox{\Cref{sec:appendix}} for a complete list of the experiments on real-world graphs (including \texttt{ILP}).
Comparing \texttt{WCC} on the three models, unsurprisingly, the results for $(t-1)$-hereditary 2-clubs and $t$-robust 2-clubs are very similar (see \cref{fig:time-diff-models}). 
\begin{figure}[t!]
	\centering
	\runtimeDiagram{add20}
	\runtimeDiagram{coAuthorsCiteseer}
	\runtimeDiagram{coAuthorsDBLP}
	\runtimeDiagram{coPapersCiteseer}
	\caption{
		Running time plots for finding $t$-well-connected 2-clubs for different values of~$t$ on \textsf{add20} (top left), \textsf{coAuthorsCiteseer} (top right), \textsf{coAuthorsDBLP} (bottom left), and \textsf{coPapersCiteseer} (bottom right).
		All running times are in seconds.
		In all plots, black circles correspond to $t$-connected 2-clubs, red triangles correspond to $(t-1)$-hereditary 2-clubs, and blue stars correspond to~$t$-robust 2-clubs.
		In the two plots on top, the instances could be solved for all~$t$ and all three models.
		In the bottom-left plot, one black marker (for~$t=7$) is missing due a time out.
		In the bottom-right plot several markers are missing due to time outs.
		Overall, smaller values of~$t$ seem harder. 
		However, there is no clear trend showing which concrete~$t$-values lead to the computationally hardest instances.
	}
	\label{fig:time-diff-models}
\end{figure}
In both variants, \texttt{WCC} solved the same 329 of the overall 338 instances within one hour.
On average, the $t$-robust 2-club could be computed 5\% faster.
Probably, this is due to the fact that for computing $(t-1)$-hereditary 2-clubs we run a clique-algorithm if the $(t-1)$-hereditary 2-club has size at most~$t$ (see \cref{smallClubs}).
Only in 20 of the 175 instances that contain a $t$-robust 2-club the maximum $(t-1)$-hereditary 2-club is larger than the maximum $t$-robust 2-club (see also \cref{fig:size-diff-models}).
\begin{figure}[t!]
	\centering
	\sizeDiagram{add20}
	\sizeDiagram{coAuthorsCiteseer}
	\sizeDiagram{coAuthorsDBLP}
	\sizeDiagram{coPapersCiteseer}
	\caption{
		The ordering of the respective maximum $t$-well-connected 2-clubs for different values of~$t$ on the same graphs: \textsf{add20} (top left), \textsf{coAuthorsCiteseer} (top right), \textsf{coAuthorsDBLP} (bottom left), and \textsf{coPapersCiteseer} (bottom right).
		In all plots, black circles correspond to $t$-connected 2-clubs, red triangles correspond to $(t-1)$-hereditary 2-clubs, and blue stars correspond to~$t$-robust 2-clubs. 
		The instances are the same as in \Cref{fig:time-diff-models} and the same markers are missing due to time-outs.
		Again, the differences between $(t-1)$-hereditary 2-clubs and $t$-robust 2-clubs are minimal. 
		Only if there is no $t$-robust 2-club and the largest $(t-1)$-hereditary 2-club is a clique, then there is a visible difference.
		The $t$-connected 2-club can be significantly larger than the other 2-clubs (see the two plots on the left side).
	}
	\label{fig:size-diff-models}
\end{figure}
The largest relative difference in sizes (in cases where a $t$-robust 2-club exists) was observed in the graph \textsf{adjnoun} where the largest $4$-robust 2-club has size six and the largest $3$-hereditary 2-club has size nine.

For \textsf{citationCiteseer}, \textsf{coAuthorsCiteseer}, \textsf{coAuthorsDBLP}, and \textsf{graph-thres-01}, \texttt{WCC} finds the largest $(t-1)$-hereditary 2-club and the largest $t$-robust 2-club relatively quickly (always within three minutes). This is not the case for the two largest graphs \textsf{coPapersCiteseer} and \textsf{coPapersDBLP}.
While \textsf{coPapersCiteseer} could be solved within one hour for all values of~$t$ except 1 and 3, \textsf{coPapersDBLP} could not be solved for~$t$ ranging from 2 to 10. 
However, experiments without time limit and with further statistics collected during all stages of the implementation solved all instances within two hours. 
Since computing and outputting these additional data affected the running time quite a lot, we did not include the measured running times in the appendix. 

\texttt{WCC} found the largest $t$-connected 2-club on many instances (290 of the 338 instances were solved).
While only one of the smaller graphs (\textsf{polblogs}) caused trouble for \texttt{WCC}, \texttt{WCC} struggled on large graphs with more than 800,000 edges; only 39 of the 78 instances corresponding to these large graphs could be solved within the time limit of one hour.
On the instances \texttt{WCC} could deal with it needed much more time than for finding the largest $t$-robust or $(t-1)$-hereditary 2-club (see also \cref{fig:time-diff-models}).
A good example here is the graph \textsf{coAuthorsDBLP} where finding the largest $9$-connected 2-club was more than 100 times slower than finding the largest $8$-hereditary 2-club.

\paragraph*{Detailed Running Time Analysis for $(t-1)$-Hereditary and $t$-Robust 2-Club Model} 
A closer inspection of the \textsf{coPapersDBLP} graph shows that, for all~$t > 1$, our implementation spent around 75\% of the time for constructing the Turing kernels and initializing the data structures (common neighbor matrix, incompatibility graph, and compatibility vector; see \cref{sec:Algorithm}). 
The rest of the time is spent mostly on applying the data reduction rules; for each value of~$t$, less than 30 seconds were spent on the actual branching. 
For~$t=1$, the maximum degree of 3,299 gives a lower bound of 3,300 for the solution, which in this case is the solution size.
Thus, all but six Turing kernels can be dismissed due to too small size before initializing the data structures (see discussion in \cref{ssec:data-structures}).
This explains why the implementation was that much faster for~$t=1$.

The \textsf{coPapersCiteseer} graph is quite different.
This can be seen for example in the solution size: For increasing~$t$ the size of the solution decreases only slightly (see \cref{fig:size-diff-models} bottom-right plot); this is in stark contrast to the \textsf{coPapersDBLP} graph.
Furthermore, for~$t=1$ on the \textsf{coPapersCiteseer} graph, there were 3,175 Turing kernels that were all larger than the lower bound.
The average number of vertices in these Turing kernels was 1,389; the maximum was 2,022, the minimum 1,189. 
Recall that our data structures involve square matrices (the common neighbor matrix) whose number of rows and columns is equal to the number of vertices in the respective Turing kernel.
This explains why the initialization of these data structures was the bottleneck in our implementation.

\subsection{Overall Conclusions from Experiments}
We demonstrated the effectiveness of our general search-tree approach to find
$t$-well-connected 2-clubs in large sparse real-world networks.  
In particular, to the best of our knowledge, we provide the first results for large sparse social networks with more than a million edges.  
The key ingredients for our implementation are the extensive usage of Turing kernels and data reduction rules. 

For finding $t$-connected 2-clubs there is room for improving our implementation.
To this end, the most obvious approach is to look for an efficient data structure that can quickly delete vertices and answer connectivity queries.
Finding good lower bounds is another approach to speed up our implementation as we only use a lower bound for the base case of finding 2-clubs.

Our experiments also indicated that the common ILP formulations are not able to cope with the large graphs we consider.
The reason is simply that creating all constraints is too time consuming.\footnote{To the best of our knowledge, ILP formulations for \textsc{2-Club} and its variants have at least~$n^2-m$ constraints~\cite{BS17,YPB17,VB12}. Even for the smallest graph from the \emph{co-author and citation} category~\cite{Dim12}, this amounts to more than $5.2 \cdot 10^{10}$ constraints which, on the hardware we used, could not be generated within one hour.} 
Combining Turing-kernelization and data reduction rules with ILPs using row generation and other sophisticated tricks might lead to a competitive program.

\section{Outlook}
\label{sec:outlook}
We studied three established models for cohesive subgraphs, namely $t$-robust, $t$-hereditary, and $t$-connected 2-clubs.
These are considered to overcome the typical hub-and-spoke structure in the solutions of classical 2-\textsc{Club} problem.
We presented theoretical findings for these models which provided the basis for our implementation of an exact algorithm for the problems.
Our experiments demonstrate the efficiency of the presented algorithm on large sparse real-world graphs.

We conclude with some challenges for future research:
\begin{itemize}
\item Our algorithmic result with respect to maximum degree (see \Cref{maxdegthm}) does not fully explain the success of our implementation on our data set as the parameter value is too large (see \cref{tab:graphs}). 
	Moreover, we have NP-hardness even if the smaller parameters average degree and degeneracy are constant (see \Cref{structuralParam}).
	Thus, finding the ``right'' structure that explains the practically observed performance also in theory remains a task for future research.

\item
	It is open to study the polynomial-time approximability of the different $t$-well connected 2-club variants.
	Notably, \textsc{2-Club} has a factor-$O(n^{1/2})$ approximation algorithm and is NP-hard to approximate within a factor of~$O(n^{1/2-\varepsilon})$~\cite{asahiro}.
\item 
	An extension of the algorithmic approach to 3-clubs and beyond is another challenge. 
	\citet{AC14} presented an ILP formulation for $t$-robust $3$-clubs and 
	\citet{VB12} presented an ILP formulation for a relaxation of $t$-robust $s$-clubs with~$s>2$.
	Solving the relaxed problems turned out to be computationally very hard; we are not aware of experimental evaluations of the exact formulation for $t$-robust $3$-clubs.  
\item
	\citet{VPBP14} studied a variation of $t$-robust 2-clubs (called $\gamma$-relative-robust 2-clubs) where essentially the constant~$t$ is replaced by a function in the size of the club.
	Our algorithms do not work for this variant since we always assume that at most one of two incompatible vertices (see \cref{def:compatible} in \cref{sec:properties}) can be in a $t$-well-connected 2-club.
	This assumption does not hold for $\gamma$-relative-robust 2-clubs.
	Extending our work to this model remains a task for future work.
\item 
	The combination of efficient data reduction rules with mathematical programming techniques may lead to further accelerations in finding $t$-well-connected 2-clubs.
\end{itemize}

\paragraph*{Acknowledgment}
We are grateful to four anonymous reviewers of \emph{EJOR} for their careful and constructive feedback.
We thank Oleksandra Yezerska, Foad Mahdavi Pajouh and Sergiy Butenko \cite{YPB17} for providing us with their source code of their programs.

\bibliographystyle{abbrvnat} 
\bibliography{robust-two-club}

\newpage

\appendix
\section{Full Experimental Results for~$t \ge 1$}\label{sec:appendix}

\pgfplotstableset{
	row sep=\\,
	begin table=\begin{longtable},
	end table=\end{longtable},
	every head row/.append style={before row=\caption{Results for the real-world graphs. The columns time and size denote the running time and the size for the respective model: $t$-robust (r), $(t-1)$-hereditary (h \& ILP; for \texttt{WCC} and our ILP), and $t$-connected (c) 2-clubs.}\\\toprule, after row=\midrule\endhead},
}

\footnotesize\centering
\pgfplotstabletypeset[columns={file,t,sizeRobust,timeRobust,sizeHereditary,timeHereditary,timeILP,sizeConnected,timeConnected},
columns/file/.style={
	column name={graph},
	assign cell content/.code={
		\pgfmathparse{mod(\pgfplotstablerow,13)}
		\pgfmathtruncatemacro{\myint}{ \pgfmathresult}  
		\ifnum\myint=0
			\pgfkeyssetvalue{/pgfplots/table/@cell content}
			{\multirow{13}{*}{\rotatebox[origin=c]{90}{##1}}}%
		\else
			\pgfkeyssetvalue{/pgfplots/table/@cell content}{\nopagebreak}%
		\fi
	},
},
columns/t/.style={column name=$t$,precision=1,column type = {r}},
columns/timeHereditary/.style={column name=time (h),fixed,precision=2,dec sep align},
columns/sizeHereditary/.style={column name=size (h),fixed,precision=1,dec sep align},
columns/timeILP/.style={column name=time (ILP),fixed,precision=2,dec sep align},
columns/timeConnected/.style={column name=time (c),fixed,precision=2,dec sep align},
columns/sizeConnected/.style={column name=size (c),fixed,precision=1,dec sep align},
columns/timeRobust/.style={column name=time (r),fixed,precision=2,dec sep align},
columns/sizeRobust/.style={column name=size (r),fixed,precision=1,dec sep align},
every nth row={13}{before row=\nopagebreak\midrule},
every last row/.style ={after row=\bottomrule}]{\dataAllModels}

\end{document}